%% LyX 2.0.2 created this file.  For more info, see http://www.lyx.org/.
%% Do not edit unless you really know what you are doing.
\documentclass[english,onecolumn,draftcls]{IEEEtran}
\usepackage[T1]{fontenc}
\usepackage[latin9]{inputenc}
\usepackage{amsthm}
\usepackage{amsmath}
\usepackage{amssymb}
\usepackage{graphicx}
\usepackage{dsfont}
\usepackage{epstopdf}

\newcommand{\overbar}[1]{\mkern 1.25mu\overline{\mkern-1.25mu#1\mkern-0.25mu}\mkern 0.25mu}
\newcommand{\overbarr}[1]{\mkern 3.5mu\overline{\mkern-3.5mu#1\mkern-0.75mu}\mkern 0.75mu}

\newcommand{\openone}{\mathds{1}}

%
% From SuccMM
%
\newcommand{\Itilde}{\tilde{I}}

%
% From JournalSU
%

\newcommand{\pebar}{\overbar{p}_{\mathrm{e}}}

\newcommand{\pe}{p_{\mathrm{e}}}

\newcommand{\msg}{\mathsf{m}}

\newcommand{\Fbar}{\overbar{F}}
\newcommand{\Funder}{\underline{F}}
\newcommand{\Gbar}{\overbar{G}}
\newcommand{\Gunder}{\underline{G}}
\newcommand{\Fbarc}{\overbar{F}_{\mathrm{c}}}
\newcommand{\Fibarc}{\overbar{F}_{1\mathrm{c}}}
\newcommand{\Funderc}{\underline{F}_{\mathrm{c}}}

\newcommand{\Tic}{\mathcal{T}_{1\mathrm{c}}}
\newcommand{\Tici}{\mathcal{T}_{1\mathrm{c}}^{(1)}}
\newcommand{\Ticii}{\mathcal{T}_{1\mathrm{c}}^{(2)}}
\newcommand{\Ticiip}{\mathcal{T}_{1\mathrm{c}}^{(2')}}
\newcommand{\Tipc}{\mathcal{T}^{\prime}_{1\mathrm{c}}}
\newcommand{\Tiic}{\mathcal{T}_{2\mathrm{c}}}
\newcommand{\Ioc}{I_{0\mathrm{c}}}
\newcommand{\Sicn}{\mathcal{S}_{1\mathrm{c},n}}
\newcommand{\Sic}{\mathcal{S}_{1\mathrm{c}}}
\newcommand{\Sipc}{\mathcal{S}_{1\mathrm{c}}^{\prime}}

\newcommand{\Ptilde}{\widetilde{P}}

\newcommand{\xvbar}{\overbar{\boldsymbol{x}}}

\newcommand{\xv}{\boldsymbol{x}}

\newcommand{\Xvbar}{\overbarr{\boldsymbol{X}}}

\newcommand{\Xv}{\boldsymbol{X}}
\newcommand{\yv}{\boldsymbol{y}}
\newcommand{\Yv}{\boldsymbol{Y}}

\newcommand{\Ac}{\mathcal{A}}
\newcommand{\Bc}{\mathcal{B}}
\newcommand{\Cc}{\mathcal{C}}

\newcommand{\Ec}{\mathcal{E}}

\newcommand{\Pc}{\mathcal{P}}
\newcommand{\Sc}{\mathcal{S}}
\newcommand{\Tc}{\mathcal{T}}

\newcommand{\Xc}{\mathcal{X}}
\newcommand{\Yc}{\mathcal{Y}}

\newcommand{\EE}{\mathbb{E}}
\newcommand{\PP}{\mathbb{P}}

\newcommand{\defeq}{\triangleq}

%
% From PaperExpurgated
%

%
% From DocumentMU
%

\newcommand{\peibar}{\overbar{p}_{\mathrm{e},1}}
\newcommand{\peiibar}{\overbar{p}_{\mathrm{e},2}}

\newcommand{\PrML}{\PP^{(\mathrm{ML})}}
\newcommand{\PrS}{\PP^{(\mathrm{S})}}
\newcommand{\PrGenie}{\PP^{(\mathrm{Genie})}}

%
% From MAC2OR
%

%
% From RefinementCC
%

\makeatletter

%%%%%%%%%%%%%%%%%%%%%%%%%%%%%% Textclass specific LaTeX commands.
\theoremstyle{plain}
\newtheorem{thm}{\protect\theoremname}
\theoremstyle{plain}
\newtheorem{lem}{\protect\lemmaname}
\theoremstyle{plain}

%%%%%%%%%%%%%%%%%%%%%%%%%%%%%% User specified LaTeX commands.
\usepackage{flushend}
\usepackage[noadjust]{cite}
\usepackage{enumerate}
\usepackage{amsmath}
\usepackage{graphicx}
\usepackage{amssymb}
\usepackage{subfigure}
\usepackage{booktabs}
\usepackage{amsfonts}
\usepackage{amssymb}

\makeatother

\usepackage{babel}
\providecommand{\lemmaname}{Lemma} 
\providecommand{\theoremname}{Theorem}
\providecommand{\propositionname}{Proposition}

\begin{document}
\author{Jonathan Scarlett, Alfonso Martinez and Albert Guill\'en i F\`abregas}

\long\def\symbolfootnote[#1]#2{\begingroup\def\thefootnote{\fnsymbol{footnote}}\footnote[#1]{#2}\endgroup}

\title{Mismatched Multi-letter Successive \\
Decoding for the Multiple-Access Channel}
\maketitle
\begin{abstract}
    This paper studies channel coding for the discrete memoryless multiple-access
    channel with a given (possibly suboptimal) decoding rule. A multi-letter successive decoding
    rule depending on an arbitrary non-negative decoding metric
    is considered, and achievable rate regions and error exponents are
    derived both for  the standard MAC (independent codebooks), and for the 
    cognitive MAC (one user knows both messages) with superposition coding.  In the cognitive case,
    the rate region and error exponent are shown to be tight with respect to the ensemble average. The rate regions are 
    compared with those of the commonly-considered decoder that chooses the message pair maximizing the decoding metric, and numerical examples are given for which 
    successive decoding yields a strictly higher sum rate for a given pair 
    of input distributions.
\end{abstract}

\section{Introduction \label{sec:INTRO}}

\symbolfootnote[0]{J.~Scarlett is with the Department of Computer Science and Department of Mathematics, National University of Singapore, 117417. (e-mail: scarlett@comp.nus.edu.sg). 
A.~Martinez is with the Department of Information and Communication Technologies,  
Universitat Pompeu Fabra, 08018 Barcelona, Spain (e-mail: alfonso.martinez@ieee.org). 
A.~Guill\'en i F\`abregas is with the Instituci\'o Catalana de Recerca i Estudis 
Avan\c{c}ats (ICREA), the Department of Information and Communication Technologies,  
Universitat Pompeu Fabra, 08018 Barcelona, Spain, and also with the Department of 
Engineering, University of Cambridge, Cambridge, CB2 1PZ, U.K. (e-mail:  guillen@ieee.org).

This work has been funded in part by the European Research Council under ERC grant agreements 259663 and 725411, by the European Union's 7th Framework Programme under grant agreement 303633 and by the Spanish Ministry of Economy and Competitiveness under grants RYC-2011-08150, TEC2012-38800-C03-03 and TEC2016-78434-C3-1-R.

This work was presented
in part at the 2014 IEEE International Symposium on Information Theory, Honolulu, HI.}

The mismatched decoding problem \cite{Csiszar2,MMRevisited,MacMM} seeks
to characterize the performance of channel coding when the decoding
rule is fixed and possibly suboptimal (e.g., due to channel uncertainty
or implementation constraints). Extensions of this problem to multiuser
settings are not only of interest in their own right, but can also
provide valuable insight into the single-user setting \cite{MacMM,MMSomekh,JournalMU}.
In particular, significant attention has been paid to the mismatched
multiple-access channel (MAC), described as follows.  User $\nu=1,2$ transmits 
a codeword $\xv_{\nu}$ from a codebook $\Cc_{\nu}=\{\xv_{\nu}^{(1)},\cdots,\xv_{\nu}^{(M_{\nu})}\}$,
and the output sequence $\yv$ is generated according to $W^n(\yv|\xv_1,\xv_2) \defeq \prod_{i=1}^{n}W(y_{i}|x_{1,i},x_{2,i})$
for some transition law $W(y|x_1,x_2)$.  The \emph{mismatched} decoder estimates the message pair as
\begin{equation}
    (\hat{\msg}_{1},\hat{\msg}_{2})=\arg\max_{(i,j)}q^{n}(\xv_{1}^{(i)},\xv_{2}^{(j)},\yv),\label{eq:SUC_MaxMetric}
\end{equation}
where $q^{n}(\xv_{1},\xv_{2},\yv)\defeq\prod_{i=1}^{n}q(x_{1,i},x_{2,i},y_{i})$
for some non-negative decoding metric $q(x_{1},x_{2},y)$.
The metric $q(x_{1},x_{2},y) = W(y|x_1,x_2)$ corresponds to optimal
maximum-likelihood (ML) decoding, whereas the introduction of mismatch
can significantly increase the error probability and lead to smaller achievable rate regions \cite{Csiszar2,MacMM}.  Even in the single-user
case, characterizing the capacity with mismatch is a long-standing open problem.

Given that the decoder only knows the metric $q^{n}(\xv_{1}^{(i)},\xv_{2}^{(j)},\yv)$
corresponding to each codeword pair, one may question whether there
exists a decoding rule that provides better performance than the maximum-metric
rule in \eqref{eq:SUC_MaxMetric}, and that is well-motivated from a practical perspective.  The second of these requirements is not redundant; for instance, if the values $\{\log q(x_{1},x_{2},y)\}$ are
rationally independent (i.e., no values can be written as linear combinations
of the others with rational coefficients), then one could consider a highly artificial and impractical decoder that uses these values to infer the joint empirical distribution of $(\xv_{1},\xv_{2},\yv)$, and in turn uses that to implement the maximum-likelihood (ML) rule.  While such a decoder is a function of $\{q^{n}(\xv_{1}^{(i)},\xv_{2}^{(j)},\yv)\}_{i,j}$ and clearly outperforms the maximum-metric rule, it does not bear any practical interest.

There are a variety of well-motivated decoding rules that are of interest beyond maximum-metric, including threshold decoding \cite{Feinstein,MMGeneralFormula}, likelihood decoding \cite{OneShot,PaperLikelihood}, and successive decoding \cite{NetworkBook,GrantRateSplit}.  In this paper, we focus on the latter, and consider the following two-step decoding rule:
\begin{align}
    \hat{\msg}_{1} & =\arg\max_{i}\sum_{j}q^{n}(\xv_{1}^{(i)},\xv_{2}^{(j)},\yv), \label{eq:SUC_Decoder1} \\
    \hat{\msg}_{2} & =\arg\max_{j}q^{n}(\xv_{1}^{(\hat{\msg}_{1})},\xv_{2}^{(j)},\yv). \label{eq:SUC_Decoder2} 
\end{align}
The study of this decoder is of interest for several reasons:
\begin{itemize}
    \item The decoder depends on the exact same quantities as the maximum-metric decoder \eqref{eq:SUC_MaxMetric} (namely, $q^{n}(\xv_{1}^{(i)},\xv_{2}^{(j)},\yv)$ for each $(i,j)$), meaning a comparison of the two rules is in a sense fair.  We will see the successive rule can sometimes achieve rates that are not achieved by the maximum-metric rule (in the random coding regime), which is the first result of this kind for the mismatched MAC.
    \item The first decoding step \eqref{eq:SUC_Decoder1} can
    be considered a mismatched version of the optimal decoding rule for
    (one user of) the interference channel. Hence, as well as giving
    an achievable rate region for the MAC with mismatched successive decoding, our results directly quantify the loss due to mismatch for the interference channel.
    \item More broadly, successive decoding is of significant practical interest for multiple-access scenarios, since it permits the use of single-user codes, as well as linear decoding complexity in the number of users \cite{GrantRateSplit}.  While the specific successive decoder that we consider does not enjoy these practical benefits, it may still serve as an interesting point of comparison for such variants.
\end{itemize}
The rule in \eqref{eq:SUC_Decoder1} is \emph{multi-letter}, in the sense that the objective function does not factorize into a product of $n$ symbols on $(\Xc_1,\Yc)$.  Single-letter successive decoders \cite[Sec.~4.5.1]{NetworkBook} could also potentially be studied from a mismatched decoding perspective by introducing a second decoding metric $q_2(x_1,y)$, but we focus on the above rule depending only on a \emph{single} metric $q(x_1,x_2,y)$.

Under the above definitions of $W$, $q$, $W^n$ and $q^n$, and assuming the corresponding alphabets $\Xc_{1}$, $\Xc_{2}$ and $\Yc$ to be finite, we consider two distinct classes of MACs:
\begin{enumerate}
    \item For the \emph{standard MAC} \cite{MacMM}, encoder $\nu=1,2$ takes as input $\msg_{\nu}$
    equiprobable on $\{1,\cdots,M_{\nu}\}$, and transmits the corresponding
    codeword $\xv_{\nu}^{(\msg_{\nu})}$ from a codebook $\Cc_{\nu}$.
    \item For the \emph{cognitive MAC} \cite{MMSomekh} (or \emph{MAC with degraded message sets} 
    \cite[Ex.~5.18]{NetworkBook}), the messages $\msg_{\nu}$ are still
    equiprobable on $\{1,\cdots,M_{\nu}\}$, but user 2 has access to both messages,
    while user 1 only knows $\msg_1$.  Thus, $\Cc_{1}$ contains codewords
    indexed as $\xv_1^{(i)}$, and $\Cc_2$ contains codewords indexed as $\xv_2^{(i,j)}$.
\end{enumerate}
For each of these, we say that a rate pair $(R_{1},R_{2})$ is achievable if, for all
$\delta>0$, there exist sequences of codebooks $\Cc_{1,n}$
and $\Cc_{2,n}$ with $M_{1}\ge e^{n(R_{1}-\delta)}$ and
$M_{2}\ge e^{n(R_{2}-\delta)}$ respectively, such that the
error probability
\begin{equation}
    \pe \triangleq \PP[(\hat{\msg}_{1},\hat{\msg}_{2})\ne(\msg_{1},\msg_{2})]
\end{equation}
tends to zero under the decoding rule described by \eqref{eq:SUC_Decoder1}--\eqref{eq:SUC_Decoder2}.
Our results will not depend on the method for breaking ties, so for
concreteness, we assume that ties are broken as errors.

For fixed rates $R_1$ and $R_2$, an error exponent $E(R_1,R_2)$ 
is said to be achievable if there exists a sequence of codebooks 
$\Cc_{1,n}$ and $\Cc_{2,n}$ with $M_1 \ge \exp(nR_1)$ 
and $M_2 \ge \exp(nR_2)$ codewords of length $n$ such that
\begin{equation}
    \liminf_{n\to\infty}-\frac{1}{n}\log \pe \ge E(R_1,R_2). \label{eq:SU:AchievableExp}
\end{equation}

Letting $\Ec_{\nu}\defeq\{\hat{\msg}_{\nu}\ne \msg_{\nu}\}$
for $\nu=1,2$, we observe that if $q(x_{1},x_{2},y)=W(y|x_{1},x_{2})$,
then \eqref{eq:SUC_Decoder1} is the decision rule that minimizes
$\PP[\Ec_{1}]$.  Using this observation, we show in Appendix \ref{sub:SUC_PROOF_ML} that
the successive decoder with $q=W$ is guaranteed to achieve the same rate region 
and error exponent as that of optimal non-successive maximum-likelihood decoding.

\subsection{Previous Work and Contributions} \label{sec:previous_work}

The vast majority of previous works on mismatched decoding have focused on
achievability results via random coding, and the only general converse
results are written in terms of non-computable information-spectrum type quantities \cite{MMGeneralFormula}.
For the point-to-point setting with mismatch, the asymptotics of random codes 
with independent codewords are well-understood for the i.i.d.~\cite{Compound},
constant-composition \cite{Hui,Csiszar1,Csiszar2,Merhav} and 
cost-constrained \cite{MMRevisited,JournalSU} ensembles.
Dual expressions and continuous alphabets were studied in \cite{Merhav} and \cite{MMRevisited}.

The mismatched MAC was introduced by Lapidoth \cite{MacMM}, who showed
that $(R_1,R_2)$ is achievable provided that
\begin{align}
    R_{1}       & \le\min_{\substack{\Ptilde_{X_{1}X_{2}Y}\,:\,\Ptilde_{X_{1}}=Q_1,\Ptilde_{X_{2}Y}=P_{X_{2}Y}, \\ \EE_{\Ptilde}[\log q(X_{1},X_{2},Y)]\ge\EE_{P}[\log q(X_{1},X_{2},Y)]}} I_{\Ptilde}(X_{1};X_{2},Y), \label{eq:MAC_R1_LM} \\
    R_{2}       & \le\min_{\substack{\Ptilde_{X_{1}X_{2}Y}\,:\,\Ptilde_{X_{2}}=Q_2,\Ptilde_{X_{1}Y}=P_{X_{1}Y}, \\ \EE_{\Ptilde}[\log q(X_{1},X_{2},Y)]\ge\EE_{P}[\log q(X_{1},X_{2},Y)]}}I_{\Ptilde}(X_{2};X_{1},Y), \label{eq:MAC_R2_LM} \\
    R_{1}+R_{2} & \le\min_{\substack{\Ptilde_{X_{1}X_{2}Y}\,:\,\Ptilde_{X_{1}}=Q_1,\Ptilde_{X_2}=Q_2,\Ptilde_{Y}=P_Y \\
    \EE_{\Ptilde}[\log q(X_{1},X_{2},Y)]\ge\EE_{P}[\log q(X_{1},X_{2},Y)] \\
    I_{\Ptilde}(X_{1};Y)\le R_{1},\, I_{\Ptilde}(X_{2};Y)\le R_{2}}
    }D(\Ptilde_{X_{1}X_{2}Y}\|Q_{1}\times Q_{2}\times\Ptilde_{Y}), \label{eq:MAC_R12_LM}
\end{align}
where $Q_1$ and $Q_2$ are arbitrary input distributions, and $P_{X_1X_2Y} \triangleq Q_1 \times Q_2 \times W$.  The corresponding ensemble-tight error exponent was given by the present authors in \cite{JournalMU}, along with equivalent dual expressions and generalizations to continuous alphabets.  Error exponents were also presented for the MAC with general decoding rules in \cite{NazariLB}, but the results therein are primarily targeted to optimal or universal metrics; in particular, when applied to the mismatched setting, the exponents are not ensemble-tight.

The mismatched cognitive MAC was introduced by Somekh-Baruch \cite{MMSomekh},
who used superposition coding to show that $(R_1,R_2)$ is achievable provided that
\begin{align}
    R_{2}       & \le\min_{\substack{\Ptilde_{X_{1}X_{2}Y}\,:\,\Ptilde_{X_1X_2}=Q_{X_1X_2},\Ptilde_{X_{1}Y}=P_{X_{1}Y}, \\ \EE_{\Ptilde}[\log q(X_{1},X_{2},Y)]\ge\EE_{P}[\log q(X_{1},X_{2},Y)]}}I_{\Ptilde}(X_{2};Y|X_{1}), \label{eq:MAC_R2_LM_C} \\
    R_{1}+R_{2} & \le\min_{\substack{\Ptilde_{X_{1}X_{2}Y}\,:\,\Ptilde_{X_1X_2}=Q_{X_1X_2},\Ptilde_{Y}=P_Y, \\ 
    \EE_{\Ptilde}[\log q(X_{1},X_{2},Y)]\ge\EE_{P}[\log q(X_{1},X_{2},Y)],\, I_{\Ptilde}(X_{1},;Y)\le R_{1}}
    } I_{\Ptilde}(X_1,X_2;Y), \label{eq:MAC_R12_LM_C}
\end{align}
where $Q_{X_1X_2}$ is an arbitrary input distribution, and $P_{X_1X_2Y} \triangleq Q_{X_1X_2} \times W$.
The corresponding ensemble-tight error exponent was also given therein.
Various forms of superposition coding were also studied by the present authors
in \cite{JournalMU}, but with a focus on the single-user channel as opposed
to the cognitive MAC.

Both of the above regions are known to be tight with respect to the ensemble
average for constant-composition random coding, meaning that any looseness
is due to the random-coding ensemble itself, rather than the bounding
techniques used in the analysis \cite{MacMM,MMSomekh}.  This notion of tightness was first
explored in the single-user setting in \cite{Merhav}.  We also note that
the above regions lead to improved achievability bounds for the 
single-user setting \cite{MacMM,MMSomekh}.

The main contributions of this paper are achievable rate regions for both
the standard MAC (Section \ref{sec:SUC_STANDARD}) and cognitive MAC 
(Section \ref{sec:SUC_COGNITIVE}) under the successive decoding rule in
\eqref{eq:SUC_Decoder1}--\eqref{eq:SUC_Decoder2}.  For the cognitive case,
we also provide an ensemble tightness result.  Both regions are numerically compared
to their counterparts for maximum-metric decoding, and in each case, it
is observed that the successive rule can provide a strictly higher sum rate,
though neither the successive nor maximum-metric region is included in the other in general.

A by-product of our analysis is achievable error exponents corresponding to the rate regions.  Our exponent
for the standard MAC is related to that of Etkin \emph{et al.}~\cite{ExponentsIC}
for the interference channel, as both use parallel coding.  Similarly, 
our exponent for the cognitive MAC is related to that of Kaspi and
Merhav \cite{BCExp2}, since both use superposition coding.  Like these
works, we make use of type class enumerators; however, a key difference
is that we avoid applying a Gallager-type bound
in the initial step, and we instead proceed immediately with type-based methods.  

%Consequently, our exponents are written only in terms of 
%minimizations over ``primal" parameters (joint distributions),  as opposed to a combination also containing maximizations over ``dual'' parameters. Among all of these exponents, our exponent for the cognitive MAC is the only one for which ensemble tightness is proved.  

In a work that developed independently of ours, the interference channel perspective 
was pursued in depth in the \emph{matched} case in \cite{WasimIC}, with a focus on error exponents.  The error exponent of \cite{WasimIC} is similar to that derived in the present paper, but also contains an extra maximization term that, at least in principle, could improve the exponent.  Currently, no examples are known where such an improvement is obtained.  Moreover, while the analysis techniques of \cite{WasimIC} extend to the mismatched case, doing so leads to the same achievable rate region as ours; the only potential improvement is in the exponent.  Finally, we note that while our focus is solely on codebooks with independent codewords, error exponents were also given for the Han-Kobayashi construction in \cite{WasimIC}.

Another line of related work studied the achievable rates of polar coding with mismatch \cite{Alsan1,Alsan2,Alsan3,AlsanThesis}, using a computationally efficient successive decoding rule.  A single-letter achievable rate was given, and it was shown that for a given mismatched transition law (i.e., a conditional probability distribution incorrectly used as if it were the true channel), this decoder can sometimes outperform the maximum-metric decoder.  As mentioned above, we make analogous observations in the present paper, albeit for a multiple-access scenario with a very different type of successive decoding.

\subsection{Notation}

Bold symbols are used for vectors (e.g., $\xv$), and the
corresponding $i$-th entry is written using a subscript (e.g., $x_{i}$).
Subscripts are used to denote the distributions corresponding to expectations
and mutual informations (e.g., $\EE_{P}[\cdot]$, $I_{P}(X;Y)$).
The marginals of a joint distribution $P_{XY}$ are denoted by $P_{X}$
and $P_{Y}$. We write $P_{X}=\Ptilde_{X}$ to denote element-wise
equality between two probability distributions on the same alphabet.
The set of all sequences of length $n$ with a given empirical distribution
$P_{X}$ (i.e., type \cite[Ch. 2]{CsiszarBook}) is denoted by $T^{n}(P_{X})$,
and similarly for joint types.
We write $f(n)\doteq g(n)$ if $\lim_{n\to\infty}\frac{1}{n}\log\frac{f(n)}{g(n)}=0$,
and similarly for $\dot{\le}$ and $\dot{\ge}$. We write $[\alpha]^{+}=\max(0,\alpha)$,
and denote the indicator function by $\openone\{\cdot\}$
 
\section{Main Results}

\subsection{Standard MAC} \label{sec:SUC_STANDARD}

Before presenting our main result for the standard MAC, we state the random-coding
distribution that is used in its proof.  For $\nu=1,2$, we fix an input distribution
$Q_{\nu}\in\Pc(\Xc_{\nu})$, and let $Q_{\nu,n}$ be a type with the same support as $Q_{\nu}$
such that $\max_{x_{\nu}}|Q_{\nu,n}(x_{\nu}) - Q_{\nu}(x_{\nu})|\le\frac{1}{n}$.
We set
\begin{equation}
    P_{\Xv_{\nu}}(\xv_{\nu})=\frac{1}{|T^{n}(Q_{\nu,n})|}\openone\big\{\xv_{\nu}\in T^{n}(Q_{\nu,n})\big\},
\end{equation}
and consider codewords $\{\Xv_{\nu}^{(i)}\}_{i=1}^{M_{\nu}}$ that are independently 
distributed according to $P_{\Xv_{\nu}}$.  Thus,
\begin{equation}
    \Big(\{\Xv_{1}^{(i)}\}_{i=1}^{M_{1}},\{\Xv_{2}^{(j)}\}_{i=1}^{M_{2}}\Big)\sim\prod_{i=1}^{M_{1}}P_{\Xv_{1}}(\xv_{1}^{(i)})\prod_{j=1}^{M_{2}}P_{\Xv_{2}}(\xv_{2}^{(j)}). \label{eq:SUC_Distr1}
\end{equation}

Our achievable rate region is written in terms of the functions
\begin{align}
    \Fbar(\Ptilde_{X_{1}X_{2}Y},\Ptilde_{X_{1}X_{2}Y}^{\prime},R_{2}) &\defeq\max\Big\{\EE_{\Ptilde}[\log q(X_{1},X_{2},Y)], \nonumber \\
        & \qquad\quad~\EE_{\Ptilde^{\prime}}[\log q(X_{1},X_{2},Y)]+\big[R_{2}-I_{\Ptilde^{\prime}}(X_{2};X_{1},Y)\big]^{+}\Big\}, \label{eq:SUC_Fupper} \\
    \Funder(P_{X_{1}X_{2}Y},R_{2}) & \defeq\max\bigg\{\EE_{P}[\log q(X_{1},X_{2},Y)], \nonumber \\
        & \hspace*{-1.6cm} \max_{P_{X_{1}X_{2}Y}^{\prime}\in\Tc_{1}^{\prime}(P_{X_{1}X_{2}Y},R_{2})}\EE_{P^{\prime}}[\log q(X_{1},X_{2},Y)]+R_{2}-I_{P^{\prime}}(X_{2};X_{1},Y)\bigg\},\label{eq:SUC_Flower}
\end{align}
where
\begin{align}
    \Tc_{1}^{\prime}(P_{X_1X_2Y},R_{2}) \defeq\Big\{ P_{X_{1}X_{2}Y}^{\prime}\,:\, P_{X_{1}Y}^{\prime}=P_{X_{1}Y},P_{X_{2}}^{\prime}=P_{X_2},I_{P^{\prime}}(X_{2};X_{1},Y)\le R_{2}\Big\}. \label{eq:SUC_SetT1prime} 
\end{align}
We will see in our analysis that $P_{X_1X_2Y}$ corresponds to the joint type of the transmitted codewords and the output sequence, and $\Ptilde_{X_1X_2Y}$
corresponds to the joint type of some incorrect codeword of user 1, the 
transmitted codeword of user 2, and the output sequence.  Moreover,
$P'_{X_1X_2Y}$ and $\Ptilde'_{X_1X_2Y}$ similarly correspond to joint
types, the difference being that the $X_2$ marginal is 
associated with exponentially many sequences in the
summation in \eqref{eq:SUC_Decoder1}.

\begin{thm} \label{thm:SUC_MainResult} 
    For any input distributions $Q_{1}$ and $Q_{2}$, the pair 
    $(R_{1},R_{2})$ is achievable for the standard MAC with the 
    mismatched successive decoding rule
    in \eqref{eq:SUC_Decoder1}--\eqref{eq:SUC_Decoder2} provided that
    \begin{align}
        R_{1} & \le\min_{(\Ptilde_{X_{1}X_{2}Y},\Ptilde_{X_{1}X_{2}Y}^{\prime})\in\Tc_{1}(Q_1 \times Q_2 \times W,R_{2})}I_{\Ptilde}(X_{1};X_{2},Y)+\big[I_{\Ptilde^{\prime}}(X_{2};X_{1},Y)-R_{2}\big]^{+}, \label{eq:SUC_R1} \\
        R_{2} & \le\min_{\Ptilde_{X_{1}X_{2}Y}\in\Tc_{2}(Q_1 \times Q_2 \times W)}I_{\Ptilde}(X_{2};X_{1},Y),\label{eq:SUC_R2}             
    \end{align}
    where
    \begin{align}
        \Tc_{1}(P_{X_1X_2Y},R_{2}) & \defeq\Big\{(\Ptilde_{X_{1}X_{2}Y},\Ptilde_{X_{1}X_{2}Y}^{\prime})\,:\,\Ptilde_{X_{2}Y}=P_{X_{2}Y},\Ptilde_{X_{1}}=P_{X_1},\nonumber \\
        & \hspace*{-4ex}\Ptilde_{X_{1}Y}^{\prime}=\Ptilde_{X_{1}Y},\Ptilde_{X_{2}}^{\prime}=P_{X_2},\Fbar(\Ptilde_{X_{1}X_{2}Y},\Ptilde_{X_{1}X_{2}Y}^{\prime},R_{2})\ge\Funder(P_{X_{1}X_{2}Y},R_{2})\Big\}, \label{eq:SUC_SetT1} \\
        \Tc_{2}(P_{X_1X_2Y}) &\defeq\Big\{\Ptilde_{X_{1}X_{2}Y}\,:\,\Ptilde_{X_{2}}=P_{X_2},\Ptilde_{X_{1}Y}=P_{X_{1}Y}, \nonumber \\
        & \qquad\qquad\qquad\qquad \EE_{\Ptilde}[\log q(X_{1},X_{2},Y)]\ge\EE_{P}[\log q(X_{1},X_{2},Y)]\Big\}.\label{eq:SUC_SetT2}
    \end{align}
\end{thm}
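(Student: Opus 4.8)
The plan is a random-coding argument based on type-class enumerators. By the symmetry of the ensemble \eqref{eq:SUC_Distr1} we may condition on $(\msg_1,\msg_2)=(1,1)$ being transmitted and set $\Ec_\nu\defeq\{\hat{\msg}_\nu\ne\msg_\nu\}$, so that $\pe\le\PP[\Ec_1]+\PP[\Ec_2\cap\Ec_1^{c}]$. The term $\PP[\Ec_2\cap\Ec_1^{c}]$ is the easy one: on $\Ec_1^{c}$ the second step \eqref{eq:SUC_Decoder2} is point-to-point mismatched decoding of $\msg_2$ in which $(\Xv_1^{(1)},\Yv)$ acts as the channel output, so $\PP[\Ec_2\cap\Ec_1^{c}]\le\PP[\exists j\ne1:q^{n}(\Xv_1^{(1)},\Xv_2^{(j)},\Yv)\ge q^{n}(\Xv_1^{(1)},\Xv_2^{(1)},\Yv)]$; using conditional typicality to obtain $q^{n}(\Xv_1^{(1)},\Xv_2^{(1)},\Yv)\,\dot{\ge}\,\exp(n\EE_P[\log q])$ with high probability, followed by a union bound over $j$ and over the joint type of $(\Xv_1^{(1)},\Xv_2^{(j)},\Yv)$, this is $\dot{\le}\,\exp\!\big(n(R_2-\min_{\Ptilde\in\Tc_{2}(P_{X_1X_2Y})}I_{\Ptilde}(X_2;X_1,Y))\big)$, which vanishes whenever \eqref{eq:SUC_R2} holds strictly.

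The work is in bounding $\PP[\Ec_1]$. Writing $S(\xv_1)\defeq\sum_{j}q^{n}(\xv_1,\Xv_2^{(j)},\Yv)$, the event $\Ec_1$ occurs iff $S(\Xv_1^{(i)})\ge S(\Xv_1^{(1)})$ for some $i\ne1$. The first subgoal is to show that, with high probability, $S(\Xv_1^{(1)})\,\dot{\ge}\,\exp\!\big(n\Funder(P_{X_1X_2Y},R_2)\big)$, where $P_{X_1X_2Y}$ denotes the joint type of the transmitted triple (which concentrates near $Q_1\times Q_2\times W$): the term $j=1$ equals $\exp(n\EE_P[\log q])$, while for each $\Ptilde'\in\Tc_{1}^{\prime}(P_{X_1X_2Y},R_2)$ the number of indices $j\ne1$ with $(\Xv_1^{(1)},\Xv_2^{(j)},\Yv)\in T^{n}(\Ptilde')$ concentrates about $\exp\!\big(n(R_2-I_{\Ptilde'}(X_2;X_1,Y))\big)$, so those terms contribute at least $\exp\!\big(n(\EE_{\Ptilde'}[\log q]+R_2-I_{\Ptilde'}(X_2;X_1,Y))\big)$; maximizing over the polynomially many such $\Ptilde'$ recovers $\Funder$. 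The second subgoal is an upper bound on $\PP[S(\Xv_1)\,\dot{\ge}\,\exp(n\Funder)]$ for an independent $\Xv_1\sim P_{\Xv_{1}}$. Here I would first condition on the joint type $\Ptilde$ of $(\Xv_1,\Xv_2^{(1)},\Yv)$, which necessarily satisfies $\Ptilde_{X_2Y}=P_{X_2Y}$ and $\Ptilde_{X_1}=P_{X_1}$ and occurs with probability $\doteq\exp(-nI_{\Ptilde}(X_1;X_2,Y))$, and then handle the $j\ne1$ terms conditionally on this type: the $j=1$ term is $\exp(n\EE_{\Ptilde}[\log q])$, and for each candidate ``cluster type'' $\Ptilde'$ with $\Ptilde'_{X_1Y}=\Ptilde_{X_1Y}$, $\Ptilde'_{X_2}=P_{X_2}$, the corresponding enumerator is $\dot{\le}\,\exp\!\big(n[R_2-I_{\Ptilde'}(X_2;X_1,Y)]^{+}\big)$ with high probability and is nonzero with probability $\dot{\le}\,\exp\!\big(-n[I_{\Ptilde'}(X_2;X_1,Y)-R_2]^{+}\big)$. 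Consequently $S(\Xv_1)\,\dot{\ge}\,\exp(n\Funder)$ forces $\Fbar(\Ptilde,\Ptilde',R_2)\ge\Funder$ for some admissible $\Ptilde'$ (and when the $j=1$ term alone already exceeds $\exp(n\Funder)$ one may take $\Ptilde'$ with $[I_{\Ptilde'}(X_2;X_1,Y)-R_2]^{+}=0$, so no spurious penalty appears), and the probability of this event is $\dot{\le}\,\exp\!\big(-n(I_{\Ptilde}(X_1;X_2,Y)+[I_{\Ptilde'}(X_2;X_1,Y)-R_2]^{+})\big)$.

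Summing the last bound over the $M_1-1$ incorrect indices $i$ and over the polynomially many pairs $(\Ptilde,\Ptilde')\in\Tc_{1}(P_{X_1X_2Y},R_2)$ yields $\PP[\Ec_1]\,\dot{\le}\,\exp\!\big(n(R_1-\min_{(\Ptilde,\Ptilde')\in\Tc_{1}(P_{X_1X_2Y},R_2)}(I_{\Ptilde}(X_1;X_2,Y)+[I_{\Ptilde'}(X_2;X_1,Y)-R_2]^{+}))\big)$, which vanishes when \eqref{eq:SUC_R1} holds strictly for the empirical type $P_{X_1X_2Y}$. Since $P_{X_1X_2Y}\to Q_1\times Q_2\times W$ and all quantities appearing above ($\Fbar$, $\Funder$, the mutual informations, and the maxima/minima over $\Tc_1$, $\Tc_{1}^{\prime}$, $\Tc_2$) are continuous in this distribution, a standard limiting argument (letting the relevant slack vanish and extracting a good deterministic code) allows $P_{X_1X_2Y}$ to be replaced by $Q_1\times Q_2\times W$ and establishes achievability of every $(R_1,R_2)$ satisfying \eqref{eq:SUC_R1}--\eqref{eq:SUC_R2}.

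The main obstacle is the upper bound on $\PP[S(\Xv_1)\,\dot{\ge}\,\exp(n\Funder)]$ in the second subgoal: one must cleanly disentangle the single ``true'' term $j=1$ (which is coupled to $\Yv$ only through the product ensemble, forcing the $X_2Y$-marginal constraint and the cost $I_{\Ptilde}(X_1;X_2,Y)$) from the exponentially many ``false'' terms $j\ne1$, whose aggregate contribution is governed by a random type-class enumerator. Pinning down the exact exponent, with the two competing mechanisms surfacing inside $\Fbar$, relies on a careful two-sided concentration analysis of these enumerators --- in particular near the boundary $I_{\Ptilde'}(X_2;X_1,Y)\approx R_2$ --- together with the matching lower bound $S(\Xv_1^{(1)})\,\dot{\ge}\,\exp(n\Funder)$ from the first subgoal, so that the threshold against which the incorrect sums compete is itself characterized tightly.
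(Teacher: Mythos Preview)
Your proposal is correct and follows essentially the same approach as the paper: split into the two error events, handle the second step via the standard single-user mismatched analysis to get \eqref{eq:SUC_R2}, and for the first step lower-bound the true sum $S(\Xv_1^{(1)})$ by $\exp(n\Funder)$ via type-enumerator concentration, then upper-bound the competing sums $S(\Xv_1^{(i)})$ by conditioning on the joint type $\Ptilde$ of $(\Xv_1^{(i)},\Xv_2^{(1)},\Yv)$ and splitting the $j\ne1$ contributions according to whether $I_{\Ptilde'}(X_2;X_1,Y)\lessgtr R_2$. The paper formalizes exactly the boundary issue you flag by introducing a slack $\delta$ in the enumerator lemma and sending $\delta\to0$ at the end; otherwise the structure, the role of $(\Ptilde,\Ptilde')$, and the emergence of $\Fbar$ versus $\Funder$ are the same.
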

\begin{proof}
    See Section \ref{sec:SUC_MAIN_PROOF}.
\end{proof}
Although the minimization in \eqref{eq:SUC_R1} is a non-convex optimization
problem, it can be cast in terms of convex optimization problems,
thus facilitating its computation.  The details are provided in
Appendix \ref{sec:SUC_CONVEX}. 

While our focus is on achievable rates, the proof of Theorem \ref{thm:SUC_MainResult} 
also provides error exponents.  The exponent corresponding to \eqref{eq:SUC_R2}
is precisely that corresponding to the error event for user 2 with maximum-metric
decoding in \cite[Sec. III]{JournalMU}, and the exponent corresponding to \eqref{eq:SUC_R1} is given by
\begin{equation}
    \min_{P_{X_{1}X_{2}Y}\,:\, P_{X_{1}}=Q_{1},P_{X_{2}}=Q_{2}}D(P_{X_{1}X_{2}Y}\|Q_{1}\times Q_{2}\times W)+\big[\Itilde_1(P_{X_{1}X_{2}Y},R_{2})-R_{1}\big]^{+},\label{eq:SUC_Exponent}
\end{equation}
where $\Itilde_1(P_{X_{1}X_{2}Y},R_{2})$ denotes the right-hand side
of \eqref{eq:SUC_R1} with an arbitrary distribution $P_{X_{1}X_{2}Y}$
in place of $Q_{1}\times Q_{2}\times W$.  As discussed in Section \ref{sec:previous_work}, this exponent is closely related to a parallel work on the error exponent of the interference channel \cite{WasimIC}.

\subsubsection*{Numerical Example}

We consider the MAC with $\Xc_{1}=\Xc_{2}=\{0,1\}$,
$\Yc=\{0,1,2\}$, and
\begin{equation}
    W(y|x_{1},x_{2})=\begin{cases}
    1-2\delta_{x_{1}x_{2}} & y=x_{1}+x_{2}\\
    \delta_{x_{1}x_{2}} & \mathrm{otherwise},
\end{cases}\label{eq:SUC_Example}
\end{equation}
where $\{\delta_{x_{1}x_{2}}\}$ are constants. The mismatched decoder
uses $q(x_{1},x_{2},y)$ of a similar form, but with a fixed value 
$\delta\in\big(0,\frac{1}{3}\big)$ in place of $\{\delta_{x_1x_2}\}$.
While all such choices of $\delta$ are equivalent for maximum-metric 
decoding, this is not true for successive decoding.

We set $\delta_{00}=0.01$, $\delta_{01}=0.1$,
$\delta_{10}=0.01$, $\delta_{11}=0.3$, $\delta=0.15$, 
and $Q_{1}=Q_{2}=(0.5,0.5)$. Figure \ref{fig:SUC_Mismatched}
plots the achievable rates regions of successive decoding (Theorem
\ref{thm:SUC_MainResult}), maximum-metric decoding (\eqref{eq:MAC_R1_LM}--\eqref{eq:MAC_R12_LM}),
and matched decoding (yielding the same region whether successive
or maximum-metric).

\begin{figure} 
    \begin{centering}
        \includegraphics[width=0.42\paperwidth]{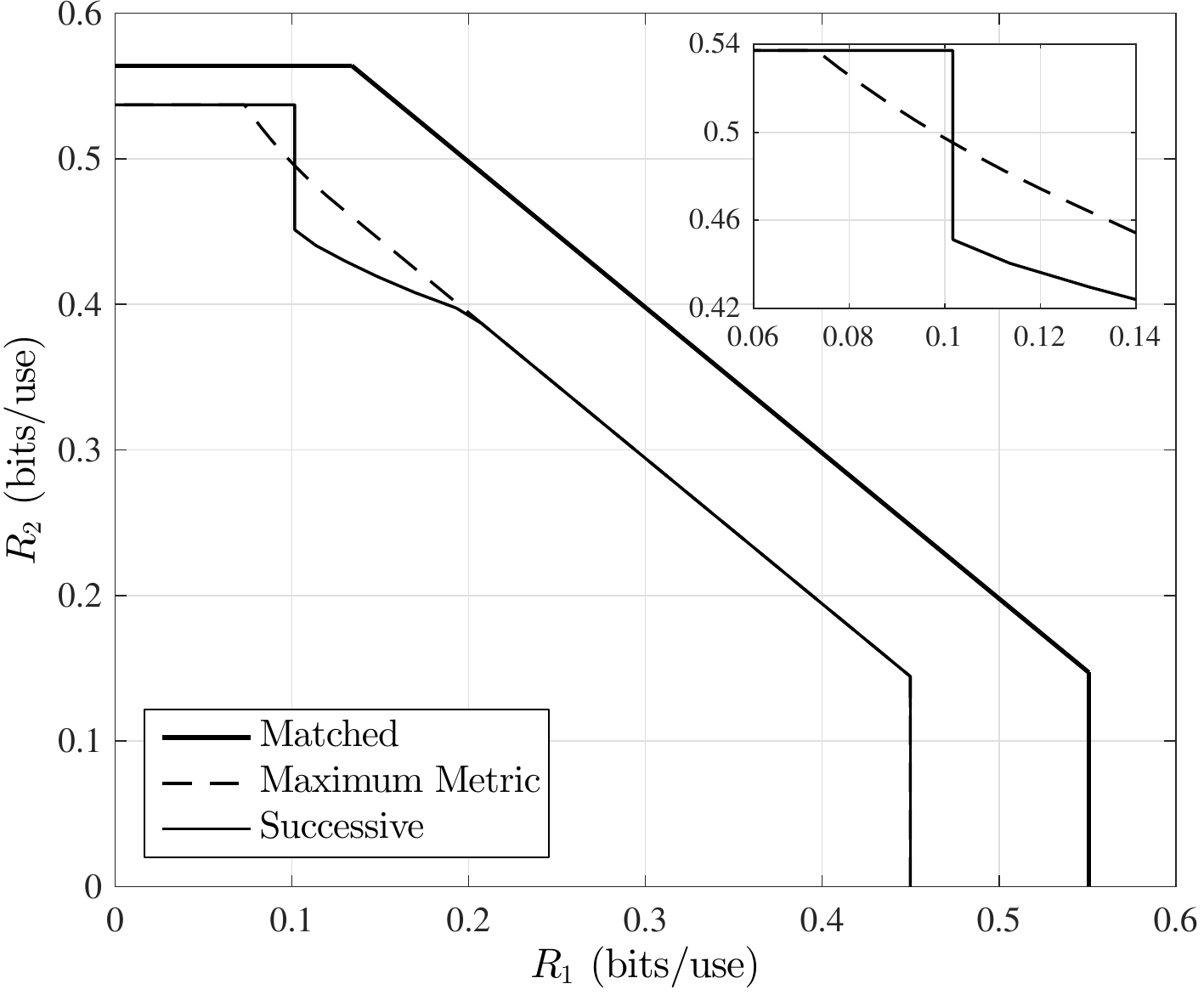}
        \par
    \end{centering}
    
    \caption{Achievable rate regions for the standard MAC given in \eqref{eq:SUC_Example}
        with mismatched successive decoding and mismatched maximum-metric decoding. \label{fig:SUC_Mismatched}}
\end{figure}

Interestingly, neither of the mismatched rate regions is included in the
other, thus suggesting that the two decoding rules are fundamentally
different. For the given input distribution, the sum rate for successive
decoding exceeds that of maximum-metric decoding. Furthermore, upon
taking the convex hull (which is justified by a time sharing argument), 
the region for successive decoding is
strictly larger. While we observed similar behaviors for other choices
of $Q_1$ and $Q_2$, it remains unclear as to whether this is always the
case. Furthermore, while the rate region for maximum-metric decoding
is tight with respect to the ensemble average, it is
unclear whether the same is true of the region given in Theorem \ref{thm:SUC_MainResult}. 

To gain insight into the shape of the achievable rate region for successive decoding, it is instructive to consider the various parts of the region.  When doing so, the reader may wish to note that the condition in \eqref{eq:SUC_R1} can equivalently be expressed as three related conditions holding simultaneously; see Appendix \ref{sec:SUC_CONVEX}, leading to the conditions \eqref{eq:SUC_CondR1b'}, \eqref{eq:SUC_CondR1d'}, and \eqref{eq:SUC_CondR1a''}.  We have the following:
\begin{itemize}
    \item The horizontal line at $R_2 \approx 0.54$ corresponds to the requirement on $R_2$ in \eqref{eq:SUC_R2}, which is identical to the condition in \eqref{eq:MAC_R2_LM} for maximum-metric decoding.
    \item The vertical line at $R_1 \approx 0.45$ also coincides with a condition for maximum-metric decoding, namely, \eqref{eq:MAC_R1_LM}.  It is unsurprising that the two rate regions coincide at $R_2 = 0$, since if user 2 only has one message then the two decoding rules are identical.  For small but positive $R_2$, the rate region boundaries still coincide even though the decoding rules differ, and the successive decoding curve is dominated by condition \eqref{eq:SUC_CondR1a''} in Appendix \ref{sec:SUC_CONVEX}.
    \item The straight diagonal part of the achievable rate region also matches that of maximum-metric decoding.  In this case, the successive decoding curve is dominated by condition \eqref{eq:SUC_CondR1d'} in Appendix \ref{sec:SUC_CONVEX}; the term $\max\{0,I_{\Ptilde'}(X_2;X_1,Y) - R_2\}$ expressed by the $[\cdot]^+$ function is dominated by $I_{\Ptilde'}(X_2;X_1,Y) - R_2$, and the overall condition becomes a sum-rate bound, i.e., an upper bound on $R_1 + R_2$.
    \item In the remaining part of the curve, as $R_1$ gets smaller, the rate region boundary bends downwards, and then suddenly becomes vertical.  In this part, the successive decoding curve is dominated by \eqref{eq:SUC_CondR1b'} in Appendix \ref{sec:SUC_CONVEX}, with $R_2$ being large enough for the term $\max\{0,I_{\Ptilde'}(X_2;X_1,Y) - R_2\}$ to equal zero.  The step-like behavior at $R_1 \approx 0.1$ corresponds to a change in the dominant term of $\Funder$ (see \eqref{eq:SUC_Flower}); in the non-vertical part, the dominant term is $\EE_{\Ptilde}[\log q(X_{1},X_{2},Y)]$, whereas in the vertical part, $R_2$ is large enough for the other term to dominate.
\end{itemize}
It is worth noting that under optimal decoding for the interference channel (which takes the form \eqref{eq:SUC_Decoder1}), it is known that for $R_{1}$ below a certain threshold, $R_{2}$ can be arbitrarily large while still ensuring that user 1's message is estimated correctly \cite{MerhavIC}.  This is in analogy with the step-like behavior observed in Figure \ref{fig:SUC_Mismatched}.

Finally, we note that the mismatched maximum-metric decoding region also has a non-pentagonal and non-convex shape (see the zoomed part of Figure \ref{fig:SUC_Mismatched}), though its deviation from the usual pentagonal shape is milder than the successive decoder in this example.

\subsection{Cognitive MAC} \label{sec:SUC_COGNITIVE}

In this section, we consider the analog of Theorem \ref{thm:SUC_MainResult} for the
cognitive MAC.  Besides being of interest in its own right, this will provide a case where 
ensemble-tightness can be established, and with the numerical results still exhibiting
similar phenomena to those shown in Figure \ref{fig:SUC_Mismatched}.

We again begin by introducing the random coding ensemble.  We fix a joint distribution
$Q_{X_1X_2}\in\Pc(\Xc_1 \times \Xc_2)$, let $Q_{X_1X_2,n}$ be the corresponding closest
joint type in the same way as the previous subsection, and write the resulting marginals
as $Q_{X_1}$, $Q_{X_1,n}$, $Q_{X_2|X_1}$,  $Q_{X_2|X_1,n}$, and so on.  We consider
superposition coding, treating user 1's messages as the ``cloud centers'', and user 2's
messages as the ``satellite codewords''.  More precisely, defining
\begin{align}
    P_{\Xv_1}(\xv_1) &= \frac{1}{|T^{n}(Q_{X_1,n})|}\openone\big\{\xv_1\in T^{n}(Q_{X_1,n})\big\}, \label{eq:COG_PX1} \\
    P_{\Xv_2|\Xv_1}(\xv_2|\xv_1) &= \frac{1}{|T^{n}_{\xv_1}(Q_{X_2|X_1,n})|}\openone\big\{\xv_2\in T^{n}_{\xv_1}(Q_{X_2|X_1,n})\big\}, \label{eq:COG_PX2}
\end{align}
the codewords are distributed as follows:
\begin{equation}
    \bigg\{\Big(\Xv_1^{(i)},\{\Xv_2^{(i,j)}\}_{j=1}^{M_{2}}\Big)\bigg\}_{i=1}^{M_{1}} \,\sim\, \prod_{i=1}^{M_{1}}\bigg(P_{\Xv_1}(\xv_1^{(i)})\prod_{j=1}^{M_{2}}P_{\Xv_2|\Xv_1}(\xv_2^{(i,j)}|\xv_1^{(i)})\bigg). \label{eq:COG_DistrCW}
\end{equation}
For the remaining definitions, we use similar notation to the standard MAC, 
with an additional subscript to avoid confusion.
The analogous quantities to \eqref{eq:SUC_Fupper}--\eqref{eq:SUC_SetT1prime} are
\begin{align}
    \Fbarc(\Ptilde_{X_{1}X_{2}Y}^{\prime},R_{2}) &\defeq\EE_{\Ptilde^{\prime}}[\log q(X_{1},X_{2},Y)]+\big[R_{2}-I_{\Ptilde^{\prime}}(X_{2};Y|X_{1})\big]^{+}, \label{eq:COG_Fupper} \\
    \Funderc(P_{X_{1}X_{2}Y},R_{2}) & \defeq\max\bigg\{\EE_{P}[\log q(X_{1},X_{2},Y)], \nonumber \\
        & \hspace*{-1.6cm} \max_{P_{X_{1}X_{2}Y}^{\prime}\in\Tipc(P_{X_{1}X_{2}Y},R_{2})}\EE_{P^{\prime}}[\log q(X_{1},X_{2},Y)]+R_{2}-I_{P^{\prime}}(X_{2};Y|X_{1})\bigg\},\label{eq:COG_Flower}
\end{align}
where
\begin{align}
    \Tipc(P_{X_1X_2Y},R_{2})  &\defeq\Big\{ P_{X_{1}X_{2}Y}^{\prime}\,:\, P_{X_{1}Y}^{\prime}=P_{X_{1}Y},P_{X_1X_2}^{\prime}=P_{X_1X_2},I_{P^{\prime}}(X_{2};Y|X_{1})\le R_{2}\Big\}. \label{eq:COG_SetT1prime}
\end{align}
Our main result for the cognitive MAC is as follows. 

\begin{thm} \label{thm:SUC_MainResult_C} 
    For any input distribution $Q_{X_1X_2}$, the pair $(R_{1},R_{2})$ 
    is achievable for the cognitive MAC with the mismatched successive decoding rule
    in \eqref{eq:SUC_Decoder1}--\eqref{eq:SUC_Decoder2} provided that
    \begin{align}
        R_{1} & \le\min_{\Ptilde_{X_{1}X_{2}Y}^{\prime} \in\Tic(Q_{X_1X_2} \times W,R_{2})}I_{\Ptilde'}(X_{1};Y)+\big[I_{\Ptilde^{\prime}}(X_{2};Y|X_{1})-R_{2}\big]^{+}\label{eq:COG_R1} \\
        R_{2} & \le\min_{\Ptilde_{X_{1}X_{2}Y}\in\Tiic(Q_{X_1X_2} \times W)}I_{\Ptilde}(X_{2};Y|X_{1}),\label{eq:COG_R2}                      
    \end{align}
    where
    \begin{align}
        \Tic(P_{X_1X_2Y},R_{2}) & \defeq\Big\{\Ptilde_{X_{1}X_{2}Y}^{\prime}\,:\,P_{X_1X_2}^{\prime}=P_{X_1X_2}, \Ptilde'_{Y}=P_{Y}, \Fbarc(\Ptilde_{X_{1}X_{2}Y}^{\prime},R_{2})\ge\Funderc(P_{X_{1}X_{2}Y},R_{2})\Big\}, \label{eq:COG_SetT1} \\
        \Tiic(P_{X_1X_2Y}) &\defeq\Big\{\Ptilde_{X_{1}X_{2}Y}\,:\,\Ptilde_{X_1X_2}=P_{X_1X_2},\Ptilde_{X_{1}Y}=P_{X_{1}Y}, \nonumber \\
        & \qquad\qquad\qquad\qquad \EE_{\Ptilde}[\log q(X_{1},X_{2},Y)]\ge\EE_{P}[\log q(X_{1},X_{2},Y)]\Big\}.\label{eq:COG_SetT2}
    \end{align}
    Conversely, for any rate pair $(R_1,R_2)$ failing to meet both of \eqref{eq:COG_R1}--\eqref{eq:COG_R2},
    the random-coding error probability resulting from \eqref{eq:COG_PX1}--\eqref{eq:COG_DistrCW}
    tends to one as $n\to\infty$.
\end{thm}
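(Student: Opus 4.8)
The plan is to apply the standard random-coding union bound, treating the two successive decoding steps separately (in the spirit of the superposition-coding analyses of \cite{JournalMU,BCExp2}). Write $\Ec_1\defeq\{\hat{\msg}_1\ne\msg_1\}$ and $\Ec_2\defeq\{\hat{\msg}_2\ne\msg_2\}$; by the symmetry of the ensemble \eqref{eq:COG_PX1}--\eqref{eq:COG_DistrCW} we may condition on $(\msg_1,\msg_2)=(1,1)$, and since $\Ec_1^c\cap\Ec_2$ coincides with $\Ec_1^c\cap\Ec_2'$, where $\Ec_2'\defeq\{\exists\,j\ne1:q^{n}(\Xv_1^{(1)},\Xv_2^{(1,j)},\Yv)\ge q^{n}(\Xv_1^{(1)},\Xv_2^{(1,1)},\Yv)\}$, we obtain $\pe\le\PP[\Ec_1]+\PP[\Ec_2']$ and also $\pe\ge\PP[\Ec_2']-\PP[\Ec_1]$. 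Conditioned on $\Xv_1^{(1)}$, the satellites $\{\Xv_2^{(1,j)}\}_{j\ne1}$ are i.i.d.\ uniform on $T^{n}_{\Xv_1^{(1)}}(Q_{X_2|X_1,n})$, so $\PP[\Ec_2']$ reduces to the single-user constant-composition mismatched analysis: for each joint type $\Ptilde\in\Tiic$ the number of competitors of that type concentrates around $e^{n[R_2-I_{\Ptilde}(X_2;Y|X_1)]}$, so that $\PP[\Ec_2']\doteq\max_{\Ptilde\in\Tiic}\min\{1,e^{n(R_2-I_{\Ptilde}(X_2;Y|X_1))}\}$, which vanishes precisely when \eqref{eq:COG_R2} holds strictly and tends to one otherwise (by a second-moment estimate on the number of such competitors). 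Here, and below, the actual joint type $P_{X_1X_2Y}$ lies within $o(1)$ of $Q_{X_1X_2}\times W$ with overwhelming probability, a fluctuation absorbed by continuity.

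The crux is the first step, which requires control of the random objective $S_i\defeq\sum_{j}q^{n}(\Xv_1^{(i)},\Xv_2^{(i,j)},\Yv)$ for every $i$. Grouping satellites by the joint type $P'$ of $(\Xv_1^{(i)},\Xv_2^{(i,j)},\Yv)$ and letting $N_i(P')$ count those attaining it, $S_i\doteq\max_{P'}N_i(P')\,e^{n\EE_{P'}[\log q(X_1,X_2,Y)]}$. For the transmitted cloud $i=1$, the marginals are pinned ($P'_{X_1Y}$ to the essentially deterministic joint type of $(\Xv_1^{(1)},\Yv)$, and $P'_{X_1X_2}$ to $Q_{X_1X_2}$), $N_1(P')$ concentrates around $e^{n[R_2-I_{P'}(X_2;Y|X_1)]^+}$, and the index $j=1$ alone contributes $e^{n\EE_{P}[\log q(X_1,X_2,Y)]}$; collecting these yields $S_1\doteq e^{n\Funderc(P_{X_1X_2Y},R_2)}$, matching \eqref{eq:COG_Flower}. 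For an incorrect cloud $i\ne1$ the center $\Xv_1^{(i)}$ is fresh, so the probability that cloud $i$ admits a configuration of a prescribed type $P'$ (center having joint type $P'_{X_1Y}$ with $\Yv$, plus at least the required number of satellites of type $P'$) is $\doteq e^{-n(I_{P'}(X_1;Y)+[I_{P'}(X_2;Y|X_1)-R_2]^+)}$, and such a configuration contributes $e^{n\Fbarc(P'_{X_1X_2Y},R_2)}$ to $S_i$. Hence $\PP[S_i\ge e^{n\theta}]\doteq e^{-nE(\theta)}$ with
\[
E(\theta)=\min\big\{I_{P'}(X_1;Y)+[I_{P'}(X_2;Y|X_1)-R_2]^{+}:P'_{X_1X_2}=Q_{X_1X_2},\,P'_Y=P_Y,\,\Fbarc(P'_{X_1X_2Y},R_2)\ge\theta\big\},
\]
and taking $\theta=\Funderc(Q_{X_1X_2}\times W,R_2)$ reproduces the set $\Tic$ of \eqref{eq:COG_SetT1} and the right-hand side of \eqref{eq:COG_R1}.

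Achievability then follows from $\PP[\Ec_1]\le\PP\big[S_1<e^{n(\Funderc-\epsilon)}\big]+e^{nR_1}\,\PP\big[S_i\ge e^{n(\Funderc-\epsilon)}\big]$ (for a fixed $i\ne1$, with $\Funderc$ evaluated at the realized joint type): the first term vanishes---indeed exponentially---by the concentration of $S_1$, and the second vanishes whenever $R_1$ is strictly below the right-hand side of \eqref{eq:COG_R1}, after sending $\epsilon\downarrow0$ and invoking continuity of all quantities in $P_{X_1X_2Y}$ near $Q_{X_1X_2}\times W$. For the converse, if \eqref{eq:COG_R1} fails we condition on $(\Yv,\Xv_1^{(1)},\{\Xv_2^{(1,j)}\}_j)$ lying in the high-probability set on which $S_1\le e^{n(\Funderc+\epsilon)}$; the remaining sums $\{S_i\}_{i\ge2}$ are then conditionally i.i.d., and each exceeds this threshold with probability at least $e^{-n(E(\Funderc+\epsilon)+\epsilon')}$, so that $\PP[\Ec_1]\ge1-\big(1-e^{-n(E(\Funderc+\epsilon)+\epsilon')}\big)^{\lfloor e^{nR_1}\rfloor-1}\to1$ since $R_1$ exceeds the right-hand side of \eqref{eq:COG_R1} (the limit $\epsilon,\epsilon'\downarrow0$ being a routine continuity argument). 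If instead \eqref{eq:COG_R1} holds but \eqref{eq:COG_R2} fails, then $\PP[\Ec_1]\to0$ while $\PP[\Ec_2']\to1$ by the estimate of the first paragraph, whence $\pe\ge\PP[\Ec_2']-\PP[\Ec_1]\to1$; in all cases $\pe\to1$.

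The main obstacle is the two-sided, exponential-order control of the random sums $S_i$ through their type enumerators $N_i(P')$. The upper bounds needed for achievability (and for $S_1\le e^{n(\Funderc+\epsilon)}$ in the converse) are routine unions over the polynomially many joint types, but the matching lower bounds---the high-probability lower bound on $S_1$ and the estimate $\PP[S_i\ge e^{n\theta}]\ge e^{-n(E(\theta)+\epsilon')}$ used in the converse---require a truncated second-moment (Chebyshev) analysis of $N_i(P')$ that treats separately the ``many satellites'' regime $I_{P'}(X_2;Y|X_1)<R_2$, where $N_i(P')$ concentrates about its mean, and the ``few satellites'' regime $I_{P'}(X_2;Y|X_1)>R_2$, where only a Poisson-type bound on $\PP[N_i(P')\ge1]$ is available; these two regimes are precisely what the $[\,\cdot\,]^{+}$ in $\Fbarc$ encodes. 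This is the same mechanism underlying the superposition-coding exponent analyses of \cite{JournalMU} and \cite{BCExp2}, and adapting those type-enumerator estimates to the present multi-letter objective is where the technical work lies.
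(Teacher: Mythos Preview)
Your proposal is correct and follows essentially the same route as the paper: both decompose into the two error events, handle $\Ec_2'$ via the known single-user constant-composition analysis, and attack $\Ec_1$ by expressing each sum $S_i$ through type-class enumerators $N_i(P')$, splitting into the ``many satellites'' regime (concentration of $N_i(P')$) and the ``few satellites'' regime (Poisson-type bound on $\PP[N_i(P')\ge1]$) to obtain $S_1\doteq e^{n\Funderc}$ and $\PP[S_i\ge e^{n\theta}]\doteq e^{-nE(\theta)}$. The paper packages the converse via the truncated-union-bound inequality $\PP[\cup_i A_i]\ge\tfrac12\min\{1,\sum_i\PP[A_i]\}$ for independent events, whereas you write it as $1-(1-p)^{M_1-1}\to1$; both exploit exactly the same independence structure peculiar to the superposition ensemble (each cloud has its own satellites), which is also why the paper singles out the cognitive case for ensemble tightness.
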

\begin{proof}
    See Section \ref{sec:COG_MAIN_PROOF}.
\end{proof}

In Appendix \ref{sec:SUC_CONVEX}, we cast \eqref{eq:COG_R1} in terms of convex 
optimization problems.  Similarly to the previous subsection, the 
exponent corresponding to \eqref{eq:COG_R2} is precisely that corresponding 
to the second user in \cite[Thm.~1]{MMSomekh}, and the exponent corresponding to 
\eqref{eq:COG_R1} is given by
\begin{equation}
    \min_{P_{X_{1}X_{2}Y}\,:\, P_{X_1X_2}=Q_{X_1X_2}}D(P_{X_{1}X_{2}Y}\|Q_{X_1X_2}\times W)+\big[\Ioc(P_{X_{1}X_{2}Y},R_{2})-R_{1}\big]^{+},\label{eq:SUC_Exponent}
\end{equation}
where $\Ioc(P_{X_{1}X_{2}Y},R_{2})$ denotes the right-hand side
of \eqref{eq:COG_R1} with an arbitrary distribution $P_{X_{1}X_{2}Y}$
in place of $Q_{X_1X_2}\times W$.  Similarly to the rate region, the proof
of Theorem \ref{thm:SUC_MainResult_C} shows that these exponents are tight
with respect to the ensemble average (sometimes referred to as \emph{exact
random-coding exponents} \cite{ExactErasure}).

\subsubsection*{Numerical Example}

\begin{figure} 
    \begin{centering}
        \includegraphics[width=0.42\paperwidth]{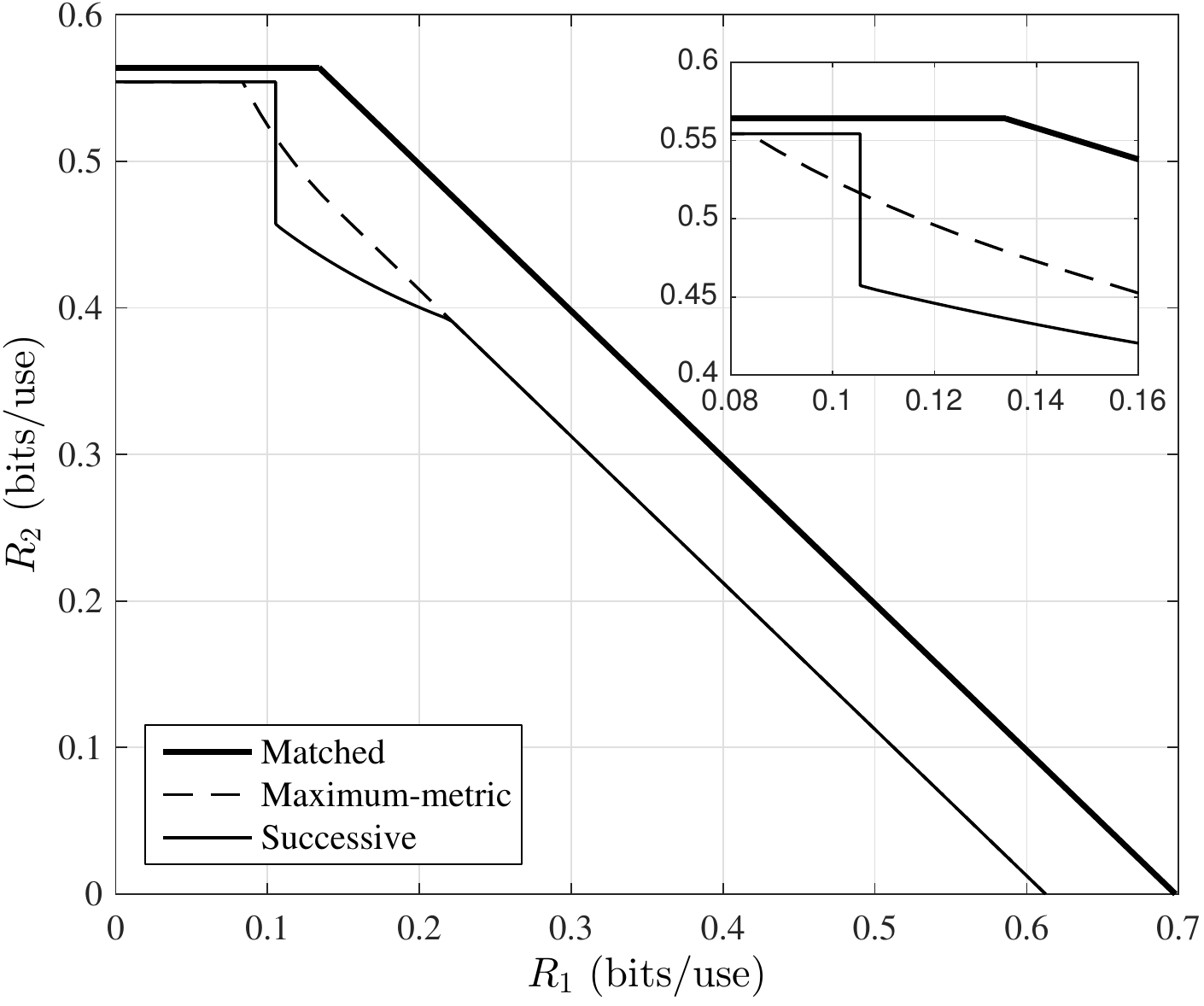}
        \par
    \end{centering} 
    
    \caption{Achievable rate regions for the cognitive MAC given in \eqref{eq:SUC_Example}
        with mismatched successive decoding and mismatched maximum-metric decoding. \label{fig:SUC_Cognitive}}
\end{figure}

We consider again consider the transition law (and the corresponding
decoding metric with a single value of $\delta$) given in 
\eqref{eq:SUC_Example} with $\delta_{00}=0.01$, $\delta_{01}=0.1$,
$\delta_{10}=0.01$, $\delta_{11}=0.3$, $\delta=0.15$, 
and $Q_{X_1X_2} = Q_1 \times Q_2$ with $Q_{1}=Q_{2}=(0.5,0.5)$. 
Figure \ref{fig:SUC_Cognitive} plots the achievable rates regions 
of successive decoding (Theorem \ref{thm:SUC_MainResult_C}), 
maximum-metric decoding (\eqref{eq:MAC_R2_LM_C}--\eqref{eq:MAC_R12_LM_C}), and matched decoding 
(again yielding the same region whether successive or maximum-metric, 
\emph{cf.}~Appendix \ref{sub:SUC_PROOF_ML}).

We see that the behavior of the decoders is completely analogous to
the non-cognitive case observed in Figure \ref{fig:SUC_Mismatched}.
The key difference here is that we know that all three regions are tight with 
respect to the ensemble average.  Thus, we may conclude that the 
somewhat unusual shape of the region for successive decoding is not 
merely an artifact of our analysis, but it is indeed inherent to
the random-coding ensemble and the decoder.  

%On the other hand,
%the ``non-pentagonal'' behavior for high values of $R_1$ is to
%be expected in the cognitive case, due to user 2 knowing the message
%of user 1 \cite[Ex.~5.18]{NetworkBook}.

\section{Proof of Theorem \ref{thm:SUC_MainResult}} \label{sec:SUC_MAIN_PROOF}

% Here we present the proof of Theorem \ref{thm:SUC_MainResult}.

The proof of Theorem \ref{thm:SUC_MainResult} is based on the method of type class enumeration (e.g.~see 
\cite{MerhavIC,MerhavErasure,ExactErasure}), and is perhaps most
similar to that of Somekh-Baruch and Merhav \cite{ExactErasure}.

\subsection*{\bf Step 1: Initial Bound}

We assume without loss of generality that $\msg_{1}=\msg_{2}=1$, and we
write $\Xv_{\nu}=\Xv_{\nu}^{(1)}$ and let $\Xvbar_{\nu}$
denote an arbitrary codeword $\Xv_{\nu}^{(j)}$ with $j\ne1$.  Thus,
\begin{equation}
    (\Xv_{1},\Xv_{2},\Yv,\Xvbar_{1},\Xvbar_{2})\sim P_{\Xv_{1}}(\xv_{1})P_{\Xv_{2}}(\xv_{2})W^{n}(\yv|\xv_{1},\xv_{2})P_{\Xv_{1}}(\xvbar_{1})P_{\Xv_{2}}(\xvbar_{2}). \label{eq:SUC_SysDistr}
\end{equation}
We define the following error events:

\smallskip
\begin{minipage}{5in}
\begin{tabbing}
    {\emph{(Type 1)}}~~~ \= $\sum_{j}q^{n}(\Xv_{1}^{(i)},\Xv_{2}^{(j)},\Yv) \ge \sum_{j}q^{n}(\Xv_{1},\Xv_{2}^{(j)},\Yv)$ for some $i \ne 1$; \\
    {\emph{(Type 2)}}~~~ \> $q^{n}(\Xv_{1},\Xv_{2}^{(j)},\Yv) \ge q^{n}(\Xv_{1},\Xv_{2},\Yv)$ for some $j \ne 1$.
\end{tabbing} 
\end{minipage}
\medskip

\noindent Denoting the probabilities of these events by $\peibar$ and $\peiibar$ respectively,
it follows that the overall random-coding error probability $\pebar$ is upper
bounded by $\peibar + \peiibar$.

The analysis of the type-2 error event is precisely that of one of the three 
error types for maximum-metric decoding \cite{MacMM,JournalMU}, yielding the rate
condition in \eqref{eq:SUC_R2}. We thus focus on the type-1 event.
We let $\peibar(\xv_{1},\xv_{2},\yv)$ denote the probability of the type-1 event
conditioned on $(\Xv_{1}^{(1)},\Xv_{2}^{(1)},\Yv)=(\xv_{1},\xv_{2},\yv)$, and we
denote the joint type of $(\xv_{1},\xv_{2},\yv)$ by $P_{X_{1}X_{2}Y}$. 
We write the objective function in \eqref{eq:SUC_Decoder1} as 
\begin{equation}
    \Xi_{\xv_{2}\yv}(\xvbar_{1})\defeq q^{n}(\xvbar_{1},\xv_{2},\yv)+\sum_{j\ne1}q^{n}(\xvbar_{1},\Xv_{2}^{(j)},\yv). \label{eq:SUC_Xi}
\end{equation}
This quantity is random due to the randomness of $\{\Xv_{2}^{(j)}\}$.
The starting point of our analysis is the union bound:
\begin{equation}
    \peibar(\xv_{1},\xv_{2},\yv)\le(M_{1}-1)\PP\big[\Xi_{\xv_{2}\yv}(\Xvbar_{1})\ge\Xi_{\xv_{2}\yv}(\xv_{1})\big].\label{eq:SUC_Union}
\end{equation}
The difficulty in analyzing \eqref{eq:SUC_Union} is that for two different codewords
$\xv_1$ and $\xvbar_1$,  $\Xi_{\xv_{2}\yv}(\xv_{1})$ 
and $\Xi_{\xv_{2}\yv}(\xvbar_{1})$ are not independent, and their joint statistics
are complicated.  We will circumvent this issue by conditioning on high probability events
under which these random quantities can be bounded by deterministic values.

\subsection*{\bf Step 2: An Auxiliary Lemma}

We introduce some additional notation. For a given realization $(\xv_1,\xv_2,\yv)$ of $(\Xv_1,\Xv_2,\Yv)$, we let $\Ptilde_{X_{1}X_{2}Y}$ denote its joint type
and we write $q^{n}(\Ptilde_{X_{1}X_{2}Y}^{\prime})\defeq q^{n}(\xvbar_{1},\xvbar_{2},\yv)$.  In addition, for a general sequence $\xvbar_1$, we define the {\em type enumerator}
\begin{equation}
    N_{\xvbar_{1}\yv}(\Ptilde_{X_{1}X_{2}Y}^{\prime}) = \sum_{j\ne1}\openone\Big\{(\xvbar_{1},\Xv_{2}^{(j)},\yv)\in T^{n}(\Ptilde_{X_{1}X_{2}Y}^{\prime})\Big\},
\end{equation}
which represents the random number of $\Xv_{2}^{(j)}$ $(j\ne1)$ such
that $(\xvbar_{1},\Xv_{2}^{(j)},\yv)\in T^{n}(\Ptilde_{X_{1}X_{2}Y}^{\prime})$.  As we will see below, when $\Xvbar_1 = \xvbar_1$, the quantity $\Xi_{\xv_{2}\yv}(\xvbar_{1})$ can be re-written in terms of $N_{\xvbar_{1}\yv}(\cdot)$, and $\Xi_{\xv_{2}\yv}(\xv_{1})$ can similarly be re-written in terms of $N_{\xv_{1}\yv}(\cdot)$.

The key to replacing random quantities by deterministic ones is to condition on events that hold with probability one approaching faster than exponentially, thus not affecting the exponential behavior of interest.  The following lemma will be used for this purpose, characterizing the behavior of $N_{\xvbar_{1}\yv}(\Ptilde_{X_{1}X_{2}Y}^{\prime})$
for various choices of $R_{2}$ and $\Ptilde_{X_{1}X_{2}Y}^{\prime}$.
The proof can be found in \cite{MerhavIC,ExactErasure}, and is based
on the fact that 
\begin{equation}
    \PP\big[(\xvbar_{1},\Xvbar_{2},\yv)\in T^{n}(\Ptilde_{X_{1}X_{2}Y}^{\prime})\big]\doteq e^{-nI_{\Ptilde^{\prime}}(X_{2};X_{1},Y)}, \label{eq:SUC_Property}
\end{equation}
which is a standard property of types \cite[Ch.~2]{CsiszarBook}. 
\begin{lem}
    \label{lem:SUC_SuperExp} \emph{\cite{MerhavIC,ExactErasure}}
    Fix the pair $(\xvbar_{1},\yv)\in T^{n}(\Ptilde_{X_{1}Y})$,
    a constant $\delta>0$, and a type $\Ptilde_{X_{1}X_{2}Y}^{\prime}\in\Sc_{1,n}^{\prime}(Q_{2,n},\Ptilde_{X_{1}Y})$. 
    \begin{enumerate}
        \item If $R_{2}\ge I_{\Ptilde^{\prime}}(X_{2};X_{1},Y)+\delta$,
        then
        \begin{equation}
            M_{2}e^{-n(I_{\Ptilde^{\prime}}(X_{2};X_{1},Y)+\delta)} \le N_{\xvbar_{1}\yv}(\Ptilde_{X_{1}X_{2}Y}^{\prime}) \le M_{2}e^{-n(I_{\Ptilde^{\prime}}(X_{2};X_{1},Y)-\delta)} \label{eq:SUC_Lem1a}                         
        \end{equation}
        with probability approaching one faster than exponentially.
        \item If $R_{2}<I_{\Ptilde^{\prime}}(X_{2};X_{1},Y)+\delta$,
        then
        \begin{equation}
            N_{\xvbar_{1}\yv}(\Ptilde_{X_{1}X_{2}Y}^{\prime})\le e^{n\,2\delta} \label{eq:SUC_Lem1c}   
        \end{equation}
        with probability approaching one faster than exponentially.
    \end{enumerate}
\end{lem}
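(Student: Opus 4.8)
The plan is to identify $N_{\xvbar_{1}\yv}(\Ptilde_{X_{1}X_{2}Y}^{\prime})$ as a sum of i.i.d.\ indicator random variables and then apply standard Chernoff-type concentration, using that the single-trial probability is known exactly on the exponential scale via \eqref{eq:SUC_Property}. Since $(\xvbar_{1},\yv)$ is held fixed while the codewords $\{\Xv_{2}^{(j)}\}_{j\ne1}$ are i.i.d.\ uniform on $T^{n}(Q_{2,n})$, the summands $B_{j}\defeq\openone\{(\xvbar_{1},\Xv_{2}^{(j)},\yv)\in T^{n}(\Ptilde_{X_{1}X_{2}Y}^{\prime})\}$ are i.i.d.\ $\mathrm{Bernoulli}(p)$ with $p=\PP[(\xvbar_{1},\Xvbar_{2},\yv)\in T^{n}(\Ptilde_{X_{1}X_{2}Y}^{\prime})]\doteq e^{-nI_{\Ptilde^{\prime}}(X_{2};X_{1},Y)}$, the hypothesis $\Ptilde_{X_{1}X_{2}Y}^{\prime}\in\mathcal{S}_{1,n}^{\prime}(Q_{2,n},\Ptilde_{X_{1}Y})$ ensuring both $p>0$ and that the exponent is exactly the stated mutual information. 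Writing $R_{2}=\tfrac1n\log M_{2}$, the mean is $\mu\defeq(M_{2}-1)p\doteq e^{n(R_{2}-I_{\Ptilde^{\prime}}(X_{2};X_{1},Y))}$. The second ingredient is the elementary Chernoff bound for a sum $N$ of independent Bernoullis of total mean $\mu$: from $\EE[e^{\pm tN}]\le e^{\mu(e^{\pm t}-1)}$ and optimization over $t>0$ one gets $\PP[N\ge a]\le(e\mu/a)^{a}$ for $a\ge\mu$, and $\PP[N\le a]\le(\mu/a)^{a}e^{a-\mu}$ for $0<a\le\mu$; everything else is substitution.

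\emph{Case 1 ($R_{2}\ge I_{\Ptilde^{\prime}}+\delta$):} here $\mu$ is exponentially large, $\mu\ge e^{n\delta/2}$ for $n$ large. For the upper inequality in \eqref{eq:SUC_Lem1a} set $a=M_{2}e^{-n(I_{\Ptilde^{\prime}}-\delta)}=\mu\,e^{n\delta+o(n)}$, so $e\mu/a=e^{-n\delta+o(n)}$ and $\PP[N\ge a]\le e^{-a(n\delta+o(n))}$ with $a\ge e^{n\delta}$, giving decay $e^{-\Omega(n\,e^{n\delta})}$. For the lower inequality set $a=M_{2}e^{-n(I_{\Ptilde^{\prime}}+\delta)}=\mu\,e^{-n\delta+o(n)}\le\mu$; since the $-\mu$ term dominates, $(\mu/a)^{a}e^{a-\mu}\le e^{-\mu/2}\le e^{-e^{n\delta/2}/2}$ for $n$ large. \emph{Case 2 ($R_{2}<I_{\Ptilde^{\prime}}+\delta$):} now $\mu\doteq e^{n(R_{2}-I_{\Ptilde^{\prime}})}\le e^{n\delta+o(n)}$; taking $a=e^{2n\delta}$ gives $e\mu/a\le e^{-n\delta+o(n)}$, hence $\PP[N\ge e^{2n\delta}]\le(e^{-n\delta+o(n)})^{e^{2n\delta}}=e^{-\Omega(n\,e^{2n\delta})}$. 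In each case the probability of the complementary (bad) event vanishes faster than any exponential, which is the claim.

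\emph{Main obstacle:} there is none of substance---this is a textbook concentration estimate, which is exactly why it is merely cited. The only care needed is bookkeeping of the sub-exponential ($\mathrm{poly}(n)$ and $\doteq$) corrections hidden in $p\doteq e^{-nI_{\Ptilde^{\prime}}}$ and $M_{2}\doteq e^{nR_{2}}$, and checking they are harmless: the resulting bounds have the form $e^{-c\,n\,e^{c'n\delta}}$, so $-\tfrac1n\log\PP[\mathrm{bad}]\to\infty$, which is the ``faster than exponentially'' statement, and a union bound over the polynomially many types $\Ptilde_{X_{1}X_{2}Y}^{\prime}$ preserves it.
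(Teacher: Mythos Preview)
Your proposal is correct and matches the approach the paper alludes to: the paper does not give its own proof of this lemma but cites \cite{MerhavIC,ExactErasure} and remarks that the argument rests on the type-counting identity \eqref{eq:SUC_Property}, which is precisely the single-trial probability you feed into the Chernoff bounds. The i.i.d.\ Bernoulli structure of $N_{\xvbar_{1}\yv}$ together with $p\doteq e^{-nI_{\Ptilde'}(X_{2};X_{1},Y)}$ and standard multiplicative Chernoff tails is exactly the argument in those references, and your bookkeeping of the sub-exponential corrections is adequate.
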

Roughly speaking, Lemma \ref{lem:SUC_SuperExp} states that if $R_{2}>I_{\Ptilde^{\prime}}(X_{2};X_{1},Y)$
then the type enumerator is highly concentrated about
its mean, whereas if $R_{2}<I_{\Ptilde^{\prime}}(X_{2};X_{1},Y)$
then the type enumerator takes a subexponential value (possibly zero)
with overwhelming probability.

Given a joint type $\Ptilde_{X_{1}Y}$, define the event
\begin{align}
    \Ac_{\delta}(\Ptilde_{X_{1}Y}) = &\Big\{ \text{\eqref{eq:SUC_Lem1a} holds for all } \Ptilde_{X_{1}X_{2}Y}^{\prime}\in\Sc_{1,n}^{\prime}(Q_{2,n},\Ptilde_{X_{1}Y}) \text{ with } R_{2}\ge I_{\Ptilde^{\prime}}(X_{2};X_{1},Y)+\delta \Big\} \nonumber \\
        \cap\,  &\Big\{ \text{\eqref{eq:SUC_Lem1c} holds for all } \Ptilde_{X_{1}X_{2}Y}^{\prime}\in\Sc_{1,n}^{\prime}(Q_{2,n},\Ptilde_{X_{1}Y}) \text{ with } R_{2}<I_{\Ptilde^{\prime}}(X_{2};X_{1},Y)+\delta \Big\}, \label{eq:A_def}
\end{align}
where
\begin{equation}
    \Sc_{1,n}^{\prime}(Q_{2,n},\Ptilde_{X_{1}Y})\defeq\Big\{\Ptilde_{X_{1}X_{2}Y}^{\prime}\in\Pc_n(\Xc_1\times\Xc_2\times\Yc)\,:\,\Ptilde_{X_{1}Y}^{\prime}=\Ptilde_{X_{1}Y},\Ptilde_{X_{2}}^{\prime}=Q_{2,n}\Big\}, \label{eq:SUC_SetS1n'}
\end{equation}
and where we recall the definition of $Q_{2,n}$ at the start of Section \ref{sec:SUC_STANDARD}.  By Lemma \ref{lem:SUC_SuperExp} and the union bound, $\PP[\Ac_{\delta}(\Ptilde_{X_{1}Y})]\to1$ faster than exponentially.
and hence we can safely condition any event on $\Ac_{\delta}(\Ptilde_{X_{1}Y})$
without changing the exponential behavior of the corresponding probability.
This can be seen by writing the following for any event $\Ec$:
\begin{align}
    \PP[\Ec] &= \PP[\Ec \cap \Ac] + \PP[\Ec \cap \Ac^c] \label{eq:SUC_SuperExp1} \\
        & \le \PP[\Ec \,|\, \Ac] + \PP[\Ac^c],
\end{align}
\begin{align}
    \PP[\Ec] &\ge \PP[\Ec \cap \Ac] \\
        &= (1-\PP[\Ac^c])\PP[\Ec \,|\, \Ac] \\
        &\ge \PP[\Ec \,|\, \Ac] - \PP[\Ac^c]. \label{eq:SUC_SuperExp5}
\end{align}
Using these observations, we will condition on $\Ac_{\delta}$ several times throughout the remainder of the proof.

\subsection*{\bf Step 3: Bound $\Xi_{\xv_{2}\yv}(\xv_{1})$ by a Deterministic Quantity}

From \eqref{eq:SUC_Xi}, we have
\begin{equation}
    \Xi_{\xv_{2}\yv}(\xvbar_{1})=q^{n}(\Ptilde_{X_{1}X_{2}Y})+\sum_{\Ptilde_{X_{1}X_{2}Y}^{\prime}}N_{\xvbar_{1}\yv}(\Ptilde_{X_{1}X_{2}Y}^{\prime})q^{n}(\Ptilde_{X_{1}X_{2}Y}^{\prime}).\label{eq:SUC_TypeMetric}
\end{equation}
Since the codewords are generated independently, $N_{\xvbar_{1}\yv}(\Ptilde_{X_{1}X_{2}Y}^{\prime})$
is binomially distributed with $M_{2}-1$ trials and success probability
$\PP\big[(\xvbar_{1},\Xvbar_{2},\yv)\in T^{n}(\Ptilde_{X_{1}X_{2}Y}^{\prime})\big].$
By construction, we have $N_{\xvbar_{1}\yv}(\Ptilde_{X_{1}X_{2}Y}^{\prime})=0$
unless $\Ptilde_{X_{1}X_{2}Y}^{\prime}\in\Sc_{1,n}^{\prime}(Q_{2,n},\Ptilde_{X_{1}Y})$, where $\Sc_{1,n}^{\prime}$ is defined in \eqref{eq:SUC_SetS1n'}.

Conditioned on $\Ac_{\delta}(P_{X_{1}Y})$, we have the
following:
\begin{align}
      \Xi_{\xv_{2}\yv}(\xv_{1})
      & =q^{n}(P_{X_{1}X_{2}Y})+\sum_{P_{X_{1}X_{2}Y}^{\prime}}N_{\xv_{1}\yv}(P_{X_{1}X_{2}Y}^{\prime})q^{n}(P_{X_{1}X_{2}Y}^{\prime}) \\
      & \ge q^{n}(P_{X_{1}X_{2}Y})+\max_{\substack{P_{X_{1}X_{2}Y}^{\prime}\in\Sc_{1,n}^{\prime}(Q_{2,n},P_{X_{1}Y}) \\
            R_{2}\ge I_{P^{\prime}}(X_{2};X_{1},Y)+\delta}}N_{\xv_{1}\yv}(P_{X_{1}X_{2}Y}^{\prime})q^{n}(P_{X_{1}X_{2}Y}^{\prime})\\
      & \ge q^{n}(P_{X_{1}X_{2}Y})+\max_{\substack{P_{X_{1}X_{2}Y}^{\prime}\in\Sc_{1,n}^{\prime}(Q_{2,n},P_{X_{1}Y}) \\
            R_{2}\ge I_{P^{\prime}}(X_{2};X_{1},Y)+\delta}}M_{2}e^{-n(I_{P^{\prime}}(X_{2};X_{1},Y)+\delta)}q^{n}(P_{X_{1}X_{2}Y}^{\prime})\label{eq:SUC_xiStep3}\\
      & \defeq\Gunder_{\delta,n}(P_{X_{1}X_{2}Y}),\label{eq:SUC_xi_lower}                                                                            
\end{align}
where \eqref{eq:SUC_xiStep3} follows from \eqref{eq:SUC_Lem1a}.
Unlike $\Xi_{\xv_{2}\yv}(\xv_{1})$, the quantity $\Gunder_{\delta,n}(P_{X_{1}X_{2}Y})$ 
is deterministic. Substituting \eqref{eq:SUC_xi_lower} into \eqref{eq:SUC_Union} and
using the fact that $\PP\big[\Ac_{\delta}(\Ptilde_{X_{1}Y})\big]\to1$ 
faster than exponentially, we obtain 
\begin{equation}
    \peibar(\xv_{1},\xv_{2},\yv)\,\,\dot{\le}\, M_{1}\PP\big[\Xi_{\xv_{2}\yv}(\Xvbar_{1})\ge\Gunder_{\delta,n}(P_{X_{1}X_{2}Y})\big].\label{eq:SUC_Union2}
\end{equation}

\subsection*{\bf Step 4: An Expansion Based on Types}

Since the statistics of $\Xi_{\xv_{2}\yv}(\xvbar_{1})$
depend on $\xvbar_{1}$ only through the joint
type of $(\xvbar_{1},\xv_{2},\yv)$,
we can write \eqref{eq:SUC_Union2} as follows: 
\begin{align}
    \peibar(\xv_{1},\xv_{2},\yv) & \,\,\dot{\le}\, M_{1}\sum_{\Ptilde_{X_{1}X_{2}Y}}\PP\big[(\Xvbar_{1},\xv_{2},\yv)\in T^{n}(\Ptilde_{X_{1}X_{2}Y})\big]\PP\big[\Xi_{\xv_{2}\yv}(\xvbar_{1})\ge\Gunder_{\delta,n}(P_{X_{1}X_{2}Y})\big] \\
        & \doteq M_{1}\max_{\Ptilde_{X_{1}X_{2}Y}\in\Sc_{1,n}(Q_{1,n},P_{X_{2}Y})}e^{-nI_{\Ptilde}(X_{1};X_{2},Y)}\PP\big[\Xi_{\xv_{2}\yv}(\xvbar_{1})\ge\Gunder_{\delta,n}(P_{X_{1}X_{2}Y})\big],\label{eq:SUC_Union3}                                     
\end{align}
where $\xvbar_{1}$ denotes an arbitrary sequence such that 
$(\xvbar_{1},\xv_{2},\yv)\in T^{n}(\Ptilde_{X_{1}X_{2}Y})$, and
\begin{equation}
    \Sc_{1,n}(Q_{1,n},P_{X_{2}Y})\defeq\Big\{\Ptilde_{X_{1}X_{2}Y}\in\Pc_n(\Xc_1\times\Xc_2\times\Yc)\,:\,\Ptilde_{X_{1}}=Q_{1,n},\Ptilde_{X_{2}Y}=P_{X_{2}Y}\Big\}.\label{eq:SUC_SetS1n}
\end{equation}
In \eqref{eq:SUC_Union3}, we have used an analogous property to \eqref{eq:SUC_Property}  
and the fact that by construction, the joint type of $(\Xvbar_1,\xv_2,\yv)$ is in
$\Sc_{1,n}(Q_{1,n},P_{X_{2}Y})$ with probability one.

\subsection*{\bf Step 5: Bound $\Xi_{\xv_{2}\yv}(\xvbar_{1})$ by a Deterministic Quantity}

Next, we again use Lemma \ref{lem:SUC_SuperExp} in order to replace
$\Xi_{\xv_{2}\yv}(\xvbar_{1})$
in \eqref{eq:SUC_Union3} by a deterministic quantity. We have from
\eqref{eq:SUC_TypeMetric} that
\begin{equation}
    \Xi_{\xv_{2}\yv}(\xvbar_{1})\le q^{n}(\Ptilde_{X_{1}X_{2}Y})+p_{0}(n)\max_{\Ptilde_{X_{1}X_{2}Y}^{\prime}}N_{\xvbar_{1}\yv}(\Ptilde_{X_{1}X_{2}Y}^{\prime})q^{n}(\Ptilde_{X_{1}X_{2}Y}^{\prime}),\label{eq:SUC_XiBar}
\end{equation} 
where $p_{0}(n)$ is a polynomial corresponding to the total number
of joint types. Substituting \eqref{eq:SUC_XiBar} into \eqref{eq:SUC_Union3}, we obtain
\begin{equation}
    \peibar(\xv_{1},\xv_{2},\yv) \,\,\dot{\le}\,M_{1}\max_{\Ptilde_{X_{1}X_{2}Y}\in\Sc_{1,n}(Q_{1,n},P_{X_{2}Y})}\max_{\Ptilde_{X_{1}X_{2}Y}^{\prime}\in\Sc_{1}^{\prime}(Q_{2,n},\Ptilde_{X_{1}Y})}e^{-nI_{\Ptilde}(X_{1};X_{2},Y)}\PP\big[\Ec_{P,\Ptilde}(\Ptilde_{X_{1}X_{2}Y}^{\prime})\big],\label{eq:SUC_Union4}                                                                                                                                                                                                                                                                                                                                                                                                              
\end{equation}
where
\begin{equation}
    \Ec_{P,\Ptilde}(\Ptilde_{X_{1}X_{2}Y}^{\prime})\defeq\Big\{ q^{n}(\Ptilde_{X_{1}X_{2}Y})+p_{0}(n)N_{\xvbar_{1}\yv}(\Ptilde_{X_{1}X_{2}Y}^{\prime})q^{n}(\Ptilde_{X_{1}X_{2}Y}^{\prime})\ge\Gunder_{\delta,n}(P_{X_{1}X_{2}Y})\Big\}, \label{eq:SUC_ePPtilde}
\end{equation}
and we have used the union bound to take the maximum over $\Ptilde_{X_{1}X_{2}Y}^{\prime}$
outside the probability in \eqref{eq:SUC_Union4}. Continuing, we have
for any $\Ptilde_{X_{1}X_{2}Y}\in\Sc_{1,n}(Q_{1,n},P_{X_{2}Y})$ that
\begin{align}
    \max_{\Ptilde_{X_{1}X_{2}Y}^{\prime}\in\Sc_{1,n}^{\prime}(Q_{2,n},\Ptilde_{X_{1}Y})}\PP\big[\Ec_{P,\Ptilde}(\Ptilde_{X_{1}X_{2}Y}^{\prime})\big]=\max\bigg\{ 
        & \max_{\substack{\Ptilde_{X_{1}X_{2}Y}^{\prime}\in\Sc_{1,n}^{\prime}(Q_{2,n},\Ptilde_{X_{1}Y}) \\ R_{2}\ge I_{\Ptilde^{\prime}}(X_{2};X_{1},Y)+\delta }} \PP\big[\Ec_{P,\Ptilde}(\Ptilde_{X_{1}X_{2}Y}^{\prime})\big], \nonumber \\
        & \max_{\substack{\Ptilde_{X_{1}X_{2}Y}^{\prime}\in\Sc_{1,n}^{\prime}(Q_{2,n},\Ptilde_{X_{1}Y}) \\ R_{2}<I_{\Ptilde^{\prime}}(X_{2};X_{1},Y)+\delta }}\PP\big[\Ec_{P,\Ptilde}(\Ptilde_{X_{1}X_{2}Y}^{\prime})\big]\bigg\}.\label{eq:SUC_TwoMax}                                                                                                                                                                                                    
\end{align}

\subsubsection*{Step 5a -- Simplify the First Term}

For the first term on the right-hand side of \eqref{eq:SUC_TwoMax}, observe that
conditioned on $\Ac_{\delta}(\Ptilde_{X_{1}Y})$
in \eqref{eq:A_def}, we have for $\Ptilde_{X_{1}X_{2}Y}^{\prime}$
satisfying $R_{2}\ge I_{\Ptilde^{\prime}}(X_{2};X_{1},Y)+\delta$ that
\begin{equation}
    N_{\xvbar_{1}\yv}(\Ptilde_{X_{1}X_{2}Y}^{\prime})q^{n}(\Ptilde_{X_{1}X_{2}Y}^{\prime})\le M_{2}e^{-n(I_{\Ptilde^{\prime}}(X_{2};X_{1},Y)-\delta)}q^{n}(\Ptilde_{X_{1}X_{2}Y}^{\prime}),
\end{equation}
where we have used \eqref{eq:SUC_Lem1a}.
Hence, and since $\PP\big[\Ac_{\delta}(\Ptilde_{X_{1}Y})\big]\to1$ 
faster than exponentially, we have 
\begin{equation}
    \PP\big[\Ec_{P,\Ptilde}(\Ptilde_{X_{1}X_{2}Y}^{\prime})\big] \,\,\dot{\le}\,\openone\Big\{ q^{n}(\Ptilde_{X_{1}X_{2}Y})+M_{2}p_{0}(n)e^{-n(I_{\Ptilde^{\prime}}(X_{2};X_{1},Y)-\delta)}q^{n}(\Ptilde_{X_{1}X_{2}Y}^{\prime})\ge\Gunder_{\delta,n}(P_{X_{1}X_{2}Y})\Big\}.\label{eq:SUC_Max1}                                                       
\end{equation}

\subsubsection*{Step 5b -- Simplify the Second Term}

For the second term on the right-hand side of \eqref{eq:SUC_TwoMax}, we define the
event $\Bc\defeq\big\{ N_{\xvbar_{1}\yv}(\Ptilde_{X_{1}X_{2}Y}^{\prime})>0\big\}$,
yielding
\begin{equation}
    \PP[\Bc]\,\,\dot{\le}\, M_{2}e^{-nI_{\Ptilde^{\prime}}(X_{2};X_{1},Y)},\label{eq:SUC_PB}
\end{equation} 
which follows from the union bound and \eqref{eq:SUC_Property}.
Whenever $R_{2}<I_{\Ptilde^{\prime}}(X_{2};X_{1},Y)+\delta$, we have
\begin{align}
      & \PP\big[\Ec_{P,\Ptilde}(\Ptilde_{X_{1}X_{2}Y}^{\prime})\big]\nonumber \\
      & \le\PP\big[\Ec_{P,\Ptilde}(\Ptilde_{X_{1}X_{2}Y}^{\prime})\,\big|\,\Bc^{c}\big]+\PP[\Bc]\PP\big[\Ec_{P,\Ptilde}(\Ptilde_{X_{1}X_{2}Y}^{\prime})\,\big|\,\Bc\big] \\
      & \,\,\dot{\le}\,\openone\big\{ q^{n}(\Ptilde_{X_{1}X_{2}Y})\ge\Gunder_{\delta,n}(P_{X_{1}X_{2}Y})\big\}+M_{2}e^{-nI_{\Ptilde^{\prime}}(X_{2};X_{1},Y)}\PP\big[\Ec_{P,\Ptilde}(\Ptilde_{X_{1}X_{2}Y}^{\prime})\,\big|\,\Bc\big],\label{eq:SUC_Max2_2} \\
      & \,\,\dot{\le}\,\openone\big\{ q^{n}(\Ptilde_{X_{1}X_{2}Y})\ge\Gunder_{\delta,n}(P_{X_{1}X_{2}Y})\big\}\nonumber \\
      & \,\,\,\,+M_{2}e^{-nI_{\Ptilde^{\prime}}(X_{2};X_{1},Y)}\openone\big\{ q^{n}(\Ptilde_{X_{1}X_{2}Y})+p_{0}(n)e^{n\,2\delta}q^{n}(\Ptilde_{X_{1}X_{2}Y}^{\prime})\ge\Gunder_{\delta,n}(P_{X_{1}X_{2}Y})\big\},\label{eq:SUC_Max2_3}                                                   
\end{align}
where \eqref{eq:SUC_Max2_2} follows using \eqref{eq:SUC_PB} and \eqref{eq:SUC_ePPtilde} 
along with the fact that $\Bc^{c}$ implies $N_{\xvbar_{1}\yv}(\Ptilde_{X_{1}X_{2}Y}^{\prime})=0$,
and \eqref{eq:SUC_Max2_3} follows by conditioning on $\Ac_{\delta}(\Ptilde_{X_{1}Y})$
and using \eqref{eq:SUC_Lem1c}.

\subsection*{\bf Step 6: Deduce the Exponent for Fixed $(\xv_1,\xv_2,\yv)$}

Observe that $\Funder(P_{X_{1}X_{2}Y},R_{2})$ in \eqref{eq:SUC_Flower}
equals the exponent of $\Gunder_{\delta,n}$ in \eqref{eq:SUC_xi_lower}
in the limit as $\delta\to0$ and $n\to\infty$. Similarly, the exponents corresponding
to the other quantities appearing in the indicator functions in \eqref{eq:SUC_Max1}
and \eqref{eq:SUC_Max2_3} tend toward the following:
\begin{align}
    \Fbar_{1}(\Ptilde_{X_{1}X_{2}Y},\Ptilde_{X_{1}X_{2}Y}^{\prime},R_{2}) &\defeq \max\Big\{\EE_{\Ptilde}[\log q(X_{1},X_{2},Y)], \nonumber \\ 
        & \qquad\qquad \EE_{\Ptilde^{\prime}}[\log q(X_{1},X_{2},Y)]+R_{2}-I_{\Ptilde^{\prime}}(X_{2};X_{1},Y)\Big\}, \label{eq:SUC_F1upper} \\                                                                                                                                                                                                                                                                                                  
    \Fbar_{2}(\Ptilde_{X_{1}X_{2}Y},\Ptilde_{X_{1}X_{2}Y}^{\prime})       &\defeq \max\Big\{\EE_{\Ptilde}[\log q(X_{1},X_{2},Y)],\EE_{\Ptilde^{\prime}}[\log q(X_{1},X_{2},Y)]\Big\}.\label{eq:SUC_F2upper}
\end{align}
We claim that combining these expressions with \eqref{eq:SUC_Union4}, \eqref{eq:SUC_TwoMax}, 
\eqref{eq:SUC_Max1} and \eqref{eq:SUC_Max2_3} and taking $\delta\to0$
(e.g., analogously to \cite[p.~737]{MMSomekh}, we may set $\delta=n^{-1/2}$),
gives the following:
\begin{align}
      & \peibar(\xv_{1},\xv_{2},\yv)\,\,\dot{\le}\,\max\Bigg\{\max_{\substack{(\Ptilde_{X_{1}X_{2}Y},\Ptilde_{X_{1}X_{2}Y}^{\prime})\in\Tc_{1}^{(1)}(P_{X_{1}X_{2}Y},R_{2})}}M_{1}e^{-nI_{\Ptilde}(X_{1};X_{2},Y)},\nonumber \\
      & \qquad\qquad\max_{\substack{(\Ptilde_{X_{1}X_{2}Y},\Ptilde_{X_{1}X_{2}Y}^{\prime})\in\Tc_{1}^{(2)}(P_{X_{1}X_{2}Y},R_{2})} }M_{1}e^{-nI_{\Ptilde}(X_{1};X_{2},Y)}M_{2}e^{-nI_{\Ptilde^{\prime}}(X_{2};X_{1},Y)}\Bigg\},\label{eq:SUC_peFinal}
\end{align}
where\footnote{Strictly speaking, these sets depend on $(Q_{1},Q_{2})$, but this dependence need not be explicit, since we have $P_{X_{1}}=Q_{1}$ and $P_{X_{2}}=Q_{2}$.}
\begin{multline}
     \Tc_{1}^{(1)}(P_{X_{1}X_{2}Y},R_{2})\defeq\Big\{(\Ptilde_{X_{1}X_{2}Y},\Ptilde_{X_{1}X_{2}Y}^{\prime})\,:\,\Ptilde_{X_{1}X_{2}Y}\in\Sc_{1}(Q_{1},P_{X_{2}Y}),  \\
     \Ptilde_{X_{1}X_{2}Y}^{\prime}\in\Sc_{1}^{\prime}(Q_{2},\Ptilde_{X_{1}Y}),I_{\Ptilde^{\prime}}(X_{2};X_{1},Y)\le R_{2},\Fbar_{1}(\Ptilde_{X_{1}X_{2}Y},\Ptilde_{X_{1}X_{2}Y}^{\prime},R_{2})\ge\Funder(P_{X_{1}X_{2}Y},R_{2})\Big\}, \label{eq:SUC_SetT1_1} 
\end{multline}
\vspace*{-5ex}
\begin{multline}
    \Tc_{1}^{(2)}(P_{X_{1}X_{2}Y},R_{2})\defeq\Big\{(\Ptilde_{X_{1}X_{2}Y},\Ptilde_{X_{1}X_{2}Y}^{\prime})\,:\,\Ptilde_{X_{1}X_{2}Y}\in\Sc_{1}(Q_{1},P_{X_{2}Y}), \\
    \Ptilde_{X_{1}X_{2}Y}^{\prime}\in\Sc_{1}^{\prime}(Q_{2},\Ptilde_{X_{1}Y}),I_{\Ptilde^{\prime}}(X_{2};X_{1},Y)\ge R_{2},\Fbar_{2}(\Ptilde_{X_{1}X_{2}Y},\Ptilde_{X_{1}X_{2}Y}^{\prime})\ge\Funder(P_{X_{1}X_{2}Y},R_{2})\Big\}, \label{eq:SUC_SetT1_2}                                                                                                                                                                                                                                                                                      
\end{multline}
and
\begin{align}
    \Sc_{1}(Q_{1},P_{X_{2}Y})                &\defeq \Big\{\Ptilde_{X_{1}X_{2}Y} \in \Pc(\Xc_1 \times \Xc_2 \times \Yc) \,:\, \Ptilde_{X_{1}}=Q_{1},\Ptilde_{X_{2}Y}=P_{X_{2}Y}\Big\}, \label{eq:SUC_SetS1} \\
    \Sc_{1}^{\prime}(Q_{2},\Ptilde_{X_{1}Y}) &\defeq \Big\{\Ptilde_{X_{1}X_{2}Y}^{\prime} \in \Pc(\Xc_1 \times \Xc_2 \times \Yc) \,:\, \Ptilde_{X_{1}Y}^{\prime}=\Ptilde_{X_{1}Y},\Ptilde_{X_{2}}^{\prime}=Q_{2} \Big\}. \label{eq:SUC_SetS1'}
\end{align}
To see that this is true, we note the following:
\begin{itemize}
    \item For the first term on the right-hand side of \eqref{eq:SUC_peFinal}, the objective function follows from \eqref{eq:SUC_XiBar}, and the additional constraint $\Fbar_{1}(\Ptilde_{X_{1}X_{2}Y},\Ptilde_{X_{1}X_{2}Y}^{\prime},R_{2})\ge\Funder(P_{X_{1}X_{2}Y},R_{2})$ in \eqref{eq:SUC_SetT1_1} follows since the left-hand side in \eqref{eq:SUC_Max1} has exponent $\Fbar_{1}$ and the right-hand side has exponent $\Funder$ by the definition of $\Gunder_{\delta,n}$ in \eqref{eq:SUC_xi_lower}.
    \item For the second term on the right-hand side of \eqref{eq:SUC_peFinal}, the objective function follows from \eqref{eq:SUC_XiBar} and the second term in \eqref{eq:SUC_Max2_3}, and the latter (along with $\Gunder_{\delta,n}$ in \eqref{eq:SUC_xi_lower}) also leads to the final constraint in \eqref{eq:SUC_SetT1_2}.
    \item The first term in \eqref{eq:SUC_Max2_3} is upper bounded by the right-hand side 
    of \eqref{eq:SUC_Max1}, and we already analyzed the latter in order to obtain the first term in \eqref{eq:SUC_peFinal}.  Hence, this term can safely be ignored.
\end{itemize}

\subsection*{\bf Step 7: Deduce the Achievable Rate Region}

By a standard property of types \cite[Ch.~2]{CsiszarBook},
$\PP\big[(\Xv_1,\Xv_2,\Yv)\in T^n(P_{X_1X_2Y})\big]$ decays to zero exponentially
fast when $P_{X_1X_2Y}$ is bounded away from $Q_1 \times Q_2 \times W$.  Therefore,
we can safely substitute $P_{X_{1}X_{2}Y}=Q_{1}\times Q_{2}\times W$ to obtain
the following rate conditions for the first decoding step:
\begin{align}
    R_{1}  & \le\min_{\substack{(\Ptilde_{X_{1}X_{2}Y},\Ptilde_{X_{1}X_{2}Y}^{\prime})\in\Tc_{1}^{(1)}(Q_1 \times Q_2 \times W,R_{2})} } I_{\Ptilde}(X_{1};X_{2},Y),\label{eq:SUC_CondR1}\\
    R_{1}+R_{2} & \le\min_{\substack{(\Ptilde_{X_{1}X_{2}Y},\Ptilde_{X_{1}X_{2}Y}^{\prime})\in\Tc_{1}^{(2)}(Q_1 \times Q_2 \times W,R_{2})} } I_{\Ptilde}(X_{1};X_{2},Y)+I_{\Ptilde^{\prime}}(X_{2};X_{1},Y).\label{eq:SUC_CondR2}
\end{align}
Finally, we claim that \eqref{eq:SUC_CondR1}--\eqref{eq:SUC_CondR2} can be united to obtain \eqref{eq:SUC_R1}.  To see this, we consider two cases:
\begin{itemize}
    \item If $R_{2} > I_{\Ptilde^{\prime}}(X_{2};X_{1},Y)$, then the $[\cdot]^+$ term in \eqref{eq:SUC_R1} equals zero, yielding the objective in \eqref{eq:SUC_CondR1}.  Similarly, in this case, the term $\Fbar$ in \eqref{eq:SUC_Fupper} simplifies to $\Fbar_1$ in \eqref{eq:SUC_F1upper}.
    \item If $R_{2} \le I_{\Ptilde^{\prime}}(X_{2};X_{1},Y)$, then the $[\cdot]^+$ term in \eqref{eq:SUC_R1} equals $I_{\Ptilde^{\prime}}(X_{2};X_{1},Y) - R_2$, yielding the objective in \eqref{eq:SUC_CondR1}.  In this case, the term $\Fbar$ in \eqref{eq:SUC_Fupper} simplifies to $\Fbar_2$ in \eqref{eq:SUC_F2upper}.
\end{itemize}

\section{Proof of Theorem \ref{thm:SUC_MainResult_C}} \label{sec:COG_MAIN_PROOF}
 
The achievability and ensemble tightness proofs for Theorem \ref{thm:SUC_MainResult_C} follow similar steps; to avoid repetition, we focus on the ensemble tightness part.  The achievability part is obtained using exactly the same high-level steps, while occasionally replacing upper bounds by lower bounds as needed via the techniques presented in Section \ref{sec:SUC_MAIN_PROOF}.

\subsection*{\bf Step 1: Initial Bound}

We consider the two error events introduced at the beginning of Section \ref{sec:SUC_MAIN_PROOF}, 
and observe that $\pebar \ge \frac{1}{2}\max\{ \peibar, \peiibar \}$.  The analysis
of $\peiibar$ is precisely that given in \cite[Thm.~1]{MMSomekh}, so we focus on $\peibar$.  

We assume without loss of generality that $\msg_{1}=\msg_{2}=1$, and we
write $\Xv_{\nu}=\Xv_{\nu}^{(1)}$ ($\nu=1,2$), let $\Xv_2^{(j)}$ denote
$\Xv_2^{(1,j)}$, let $\Xvbar_2^{(j)}$ denote 
$\Xv^{(i,j)}$ for some fixed $i\ne1$, and let $(\Xvbar_1,\Xvbar_2)$
denote $(\Xv_1^{(i)},\Xv_2^{(i,j)})$ for some fixed $(i,j)$
with $i \ne 1$.  Thus,
\begin{equation}
    (\Xv_{1},\Xv_{2},\Yv,\Xvbar_{1},\Xvbar_{2})\sim P_{\Xv_{1}}(\xv_{1})P_{\Xv_{2}|\Xv_1}(\xv_{2}|\xv_1)W^{n}(\yv|\xv_{1},\xv_{2})P_{\Xv_{1}}(\xvbar_{1})P_{\Xv_{2}|\Xv_1}(\xvbar_{2}|\xvbar_1). \label{eq:COG_SysDistr}
\end{equation}
Moreover, analogously to \eqref{eq:SUC_Xi}, we define
\begin{gather}
    \Xi_{\xv_{2}\yv}(\xv_{1}) \defeq q^{n}(\xv_{1},\xv_{2},\yv)+\sum_{j\ne1}q^{n}(\xv_{1},\Xv_{2}^{(1,j)},\yv), \label{eq:SUC_Xi_C} \\
    \tilde{\Xi}_{\yv}(\xv_{1}^{(i)}) \defeq \sum_{j}q^{n}(\xv_{1}^{(i)},\Xv_{2}^{(i,j)},\yv). \label{eq:SUC_Xi2_C}
\end{gather}
Note that here we use separate definitions corresponding to $\xv_1$ 
and $\xv_{1}^{(i)}$ ($i\ne1$) since in the cognitive MAC,
each user-1 sequence is associated with a different 
set of user-2 sequences.

Fix a joint type $P_{X_1X_2Y}$ and a triplet $(\xv_1,\xv_2,\yv) \in T^n(P_{X_1X_2Y})$,
and let $\peibar(\xv_1,\xv_2,\yv)$ be the type-1 error probability conditioned on
$(\Xv_1^{(1)},\Xv_2^{(1,1)},\Yv) = (\xv_1,\xv_2,\yv)$; here we assume without
loss of generality that $\msg_1=\msg_2=1$.  We have
\begin{align}
    \peibar(\xv_1,\xv_2,\yv) &= \PP\bigg[\bigcup_{i=2}^{M_1}\Big\{ \tilde{\Xi}_{\yv}(\Xv_1^{(i)}) \ge \Xi_{\xv_2\yv}(\xv_1) \Big\}\bigg] \\
                          &\ge \frac{1}{2}\min\Big\{1,(M_{1}-1)\PP\big[\tilde{\Xi}_{\yv}(\Xvbar_1) \ge \Xi_{\xv_2\yv}(\xv_1)\big]\Big\}, \label{eq:COG_Union}
\end{align}
where \eqref{eq:COG_Union} follows since the truncated union bound is tight to 
within a factor of $\frac{1}{2}$ for independent events \cite[Lemma A.2]{ShulmanThesis}.
Note that this argument fails for the standard MAC; there,
the independence requirement does not hold, so it is unclear whether \eqref{eq:SUC_Union}
is tight upon taking the minimum with $1$.

We now bound the inner probability in \eqref{eq:COG_Union}, which we denote
by $\Phi_1(P_{X_1X_2Y})$.  By similarly defining
\begin{equation}
    \Phi_2(P_{X_1X_2Y},\Ptilde_{X_1Y}) \triangleq \PP\big[\tilde{\Xi}_{\yv}(\Xvbar_1) \ge \Xi_{\xv_2\yv}(\xv_1) \,|\, (\Xvbar_{1},\yv)\in T^{n}(\Ptilde_{X_{1}Y})\big], \label{eq:COG_Phi2}
\end{equation}
we obtain 
\begin{align}
    \Phi_1(P_{X_1X_2Y})
        & \ge\max_{\Ptilde_{X_{1}Y}}\PP\big[(\Xvbar_{1},\yv)\in T^{n}(\Ptilde_{X_{1}Y})\big] \Phi_2(P_{X_1X_2Y},\Ptilde_{X_1Y}) \\
        & \doteq\max_{\Ptilde_{X_{1}Y}\,:\,\Ptilde_{X_1}=Q_{X_1},\Ptilde_{Y}=P_Y}e^{-nI_{\Ptilde}(X_{1};Y)} \Phi_2(P_{X_1X_2Y},\Ptilde_{X_1Y}), \label{eq:COG_PartialTypeExpr}                              
\end{align}
where \eqref{eq:COG_PartialTypeExpr} is a standard property of types \cite[Ch.~2]{CsiszarBook}.
We proceed by bounding $\Phi_2$; to do so, we let $\xvbar_1$ be an arbitrary sequence such
that $(\xvbar_1,\yv) \in T^n(\Ptilde_{X_1Y})$. By symmetry, any such sequence may be considered.

\subsection*{\bf Step 2: Type Class Enumerators}

We write each metric $\Xi_{\xv_2\yv}$ in terms of type class enumerators.  Specifically, 
again writing $q^n(P_{X_1X_2Y})$ to denote the $n$-fold product metric for a given joint type, we 
note the following analogs of \eqref{eq:SUC_TypeMetric}: 
\begin{gather}
    \Xi_{\xv_2\yv}(\xv_1) =q^{n}(P_{X_{1}X_{2}Y})+\sum_{P_{X_{1}X_{2}Y}^{\prime}} \Xi_{\yv}(\xv_1,P_{X_{1}X_{2}Y}^{\prime}),   \\
    \tilde{\Xi}_{\yv}(\xvbar_1) =\sum_{\Ptilde_{X_{1}X_{2}Y}^{\prime}}\tilde{\Xi}_{\yv}(\xvbar_1,\Ptilde_{X_{1}X_{2}Y}^{\prime}), 
\end{gather}
where
\begin{align}
    \Xi_{\yv}(\xv_1,P_{X_{1}X_{2}Y}^{\prime}) & \defeq N_{\xv_{1}\yv}(P_{X_{1}X_{2}Y}^{\prime})q^{n}(P_{X_{1}X_{2}Y}^{\prime}), \\
    \tilde{\Xi}_{\yv}(\xvbar_1,\Ptilde_{X_{1}X_{2}Y}^{\prime}) & \defeq \tilde{N}_{\xvbar_{1}\yv}(\Ptilde_{X_{1}X_{2}Y}^{\prime})q^{n}(\Ptilde_{X_{1}X_{2}Y}^{\prime}), 
\end{align}
and
\begin{align}
    N_{\xv_{1}\yv}(P_{X_{1}X_{2}Y}^{\prime})\defeq\sum_{j\ne1}\openone\Big\{(\xv_{1},\Xv_{2}^{(j)},\yv)\in T^{n}(P_{X_{1}X_{2}Y}^{\prime})\Big\}, \label{eq:COG_TypeEnum1} \\
    \tilde{N}_{\xvbar_{1}\yv}(\Ptilde_{X_{1}X_{2}Y}^{\prime})\defeq\sum_{j}\openone\Big\{(\xvbar_{1},\Xvbar_{2}^{(j)},\yv)\in T^{n}(\Ptilde_{X_{1}X_{2}Y}^{\prime})\Big\}.
\end{align}
Note the minor differences in these definitions compared to those for the standard MAC,
resulting from the differing codebook structure associated with superposition coding.
Using these definitions, we can bound \eqref{eq:COG_Phi2} as follows:
\begin{align}
    \Phi_2(P_{X_1X_2Y},\Ptilde_{X_1Y}) 
    &= \PP\bigg[\sum_{\Ptilde_{X_{1}X_{2}Y}^{\prime}}\tilde{\Xi}_{\yv}(\xvbar_1,\Ptilde_{X_{1}X_{2}Y}^{\prime})\ge q^{n}(P_{X_{1}X_{2}Y})+\sum_{P_{X_{1}X_{2}Y}^{\prime}}\Xi_{\yv}(\xv_1,P_{X_{1}X_{2}Y}^{\prime})\bigg]\\
    &\ge \PP\bigg[\max_{\Ptilde_{X_{1}X_{2}Y}^{\prime}}\tilde{\Xi}_{\yv}(\xvbar_1,\Ptilde_{X_{1}X_{2}Y}^{\prime})\ge q^{n}(P_{X_{1}X_{2}Y})+p_{0}(n)\max_{P_{X_{1}X_{2}Y}^{\prime}}\Xi_{\yv}(\xv_1,P_{X_{1}X_{2}Y}^{\prime})\bigg]\\
    &\ge \max_{\Ptilde_{X_{1}X_{2}Y}^{\prime}}\PP\bigg[\tilde{\Xi}_{\yv}(\xvbar_1,\Ptilde_{X_{1}X_{2}Y}^{\prime})\ge q^{n}(P_{X_{1}X_{2}Y})+p_{0}(n)\max_{P_{X_{1}X_{2}Y}^{\prime}}\Xi_{\yv}(\xv_1,P_{X_{1}X_{2}Y}^{\prime})\bigg] \label{eq:TypeExpr} \\
    &\triangleq \max_{\Ptilde_{X_{1}X_{2}Y}^{\prime}} \Phi_3(P_{X_1X_2Y},\Ptilde_{X_1Y},\Ptilde'_{X_1X_2Y}), \label{eq:COG_DefPhi3}
\end{align}
where $p_0(n)$ is a polynomial corresponding to the number of joint types.

\subsection*{\bf Step 3: An Auxiliary Lemma}

We define the sets
\begin{gather}
    \Sicn(Q_{X_1,n},P_{Y})\defeq\Big\{\Ptilde_{X_{1}Y}\in\Pc_n(\Xc_1\times\Yc)\,:\,\Ptilde_{X_1}=Q_{X_1,n},\Ptilde_{Y}=P_{Y}\Big\}, \label{eq:COG_SetS1n} \\
    \Sicn^{\prime}(Q_{X_1X_2,n},\Ptilde_{X_{1}Y})\defeq\Big\{\Ptilde_{X_{1}X_{2}Y}^{\prime}\in\Pc_n(\Xc_1\times\Xc_2\times\Yc)\,:\,\Ptilde_{X_{1}Y}^{\prime}=\Ptilde_{X_{1}Y},\Ptilde_{X_1X_2}=Q_{X_1X_2,n}\Big\}.\label{eq:COG_SetS1n'}
\end{gather}
The following lemma provides analogous properties to Lemma \ref{lem:SUC_SuperExp}
for joint types within $\Sicn'$, with suitable modifications to handle the fact that 
we are proving ensemble tightness rather than achievability.  It is based
on the fact that $N_{\xvbar_{1}\yv}(\Ptilde_{X_{1}X_{2}Y}^{\prime})$ has a 
binomial distribution with success probability
$\PP[(\xvbar_{1},\Xvbar_2,\yv) \in T^n(\Ptilde_{X_{1}X_{2}Y}^{\prime}) \,|\, \Xvbar_1=\xvbar_1] \doteq e^{-nI_{\Ptilde'}(X_2;Y|X_1)}$
by \eqref{eq:COG_PX2}.

\begin{lem} \label{lem:COG_SuperExp}
    Fix a joint type $\Ptilde_{X_1Y}$ and a pair $(\xvbar_1,\yv) \in T^n(\Ptilde_{X_1Y})$.
    For any joint type $\Ptilde'_{X_1X_2Y} \in \Sc'_{1,n}(Q_{X_1X_2,n},\Ptilde_{X_{2}Y})$
    and constant $\delta >0$, the type enumerator 
    $N_{\xvbar_{1}\yv}(\Ptilde_{X_{1}X_{2}Y}^{\prime})$ satisfies the following:
    \begin{enumerate}
        \item If $R_{2}\ge I_{\Ptilde^{\prime}}(X_{2};Y|X_{1})-\delta$, then
        $N_{\xvbar_{1}\yv}(\Ptilde_{X_{1}X_{2}Y}^{\prime})\le M_{2}e^{-n(I_{\Ptilde^{\prime}}(X_{2};Y|X_{1})-2\delta)}$
        with probability approaching one faster than exponentially.
        \item If $R_{2}\ge I_{\Ptilde^{\prime}}(X_{2};Y|X_{1})+\delta$, then
        $N_{\xvbar_{1}\yv}(\Ptilde_{X_{1}X_{2}Y}^{\prime})\ge M_{2}e^{-n(I_{\Ptilde^{\prime}}(X_{2};Y|X_{1})+\delta)}$
        with probability approaching one faster than exponentially.
        \item If $R_{2}\le I_{\Ptilde^{\prime}}(X_{2};Y|X_{1})-\delta$, then
        \begin{enumerate}
            \item $N_{\xvbar_{1}\yv}(\Ptilde_{X_{1}X_{2}Y}^{\prime})\le e^{n\delta}$
            with probability approaching one faster than exponentially;
            \item $\PP\big[N_{\xvbar_{1}\yv}(\Ptilde_{X_{1}X_{2}Y}^{\prime})>0\big]\doteq M_{2}e^{-nI_{\Ptilde^{\prime}}(X_{2};Y|X_{1})}$.
        \end{enumerate}
    \end{enumerate}
    Moreover, the analogous properties hold for the type enumerator 
    $N_{\xv_{1}\yv}(P_{X_{1}X_{2}Y}^{\prime})$ and any joint types
    $P_{X_1Y}$ (with $P_{X_1} = Q_{X_1,n}$) and 
    $\Ptilde'_{X_1X_2Y} \in \Sc_{1,n}^{\prime}(Q_{X_1X_2,n},P_{X_{1}Y})$.
\end{lem}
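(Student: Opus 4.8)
The plan is to reduce every assertion to an elementary tail bound for a binomial random variable, exactly as in the proof of Lemma~\ref{lem:SUC_SuperExp} in \cite{MerhavIC,ExactErasure}; the only structural change is that the ``type rarity'' exponent is now the conditional mutual information $I_{\Ptilde^{\prime}}(X_2;Y|X_1)$ rather than $I_{\Ptilde^{\prime}}(X_2;X_1,Y)$, as a consequence of the superposition codebook \eqref{eq:COG_PX1}--\eqref{eq:COG_DistrCW}. First I would observe that, conditioned on $\Xvbar_1=\xvbar_1$, the associated user-2 codewords are i.i.d.\ and uniform on the conditional type class $T^n_{\xvbar_1}(Q_{X_2|X_1,n})$ by \eqref{eq:COG_PX2}; hence $N_{\xvbar_1\yv}(\Ptilde_{X_1X_2Y}^{\prime})$ is a sum of i.i.d.\ indicators, i.e.\ binomially distributed with $M_2$ trials and success probability $p_n \defeq \PP[(\xvbar_1,\Xvbar_2,\yv)\in T^n(\Ptilde_{X_1X_2Y}^{\prime})\mid \Xvbar_1=\xvbar_1] \doteq e^{-nI_{\Ptilde^{\prime}}(X_2;Y|X_1)}$. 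The dotted equality is the standard conditional type-counting identity \cite[Ch.~2]{CsiszarBook}: the constraints $\Ptilde_{X_1X_2}^{\prime}=Q_{X_1X_2,n}$ and $\Ptilde_{X_1Y}^{\prime}=\Ptilde_{X_1Y}$ imposed in \eqref{eq:COG_SetS1n'} force $T^n_{\xvbar_1\yv}(\Ptilde_{X_2|X_1Y}^{\prime})$ to be a nonempty subset of $T^n_{\xvbar_1}(Q_{X_2|X_1,n})$, so $p_n$ equals the ratio of their cardinalities. In particular, writing $\mu_n\defeq\EE[N_{\xvbar_1\yv}(\Ptilde_{X_1X_2Y}^{\prime})]=M_2 p_n$, we have $\mu_n\doteq e^{n(R_2-I_{\Ptilde^{\prime}}(X_2;Y|X_1))}$.

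Given this, the four parts are routine. For the \emph{upper} bounds (parts~1 and 3(a)) I would use the elementary binomial estimate: if $k\ge 2e\mu_n$ then $\PP[N_{\xvbar_1\yv}(\Ptilde_{X_1X_2Y}^{\prime})\ge k]\le\sum_{j\ge k}\mu_n^j/j!\le\sum_{j\ge k}(e\mu_n/j)^j\le 2^{1-k}$, using $j!\ge(j/e)^j$. In both parts the stated threshold is exponentially large in $n$: it is $M_2 e^{-n(I_{\Ptilde^{\prime}}(X_2;Y|X_1)-2\delta)}\doteq\mu_n e^{2n\delta}$ in part~1, whose exponent $R_2-I_{\Ptilde^{\prime}}(X_2;Y|X_1)+2\delta$ is at least $\delta>0$ since $R_2\ge I_{\Ptilde^{\prime}}(X_2;Y|X_1)-\delta$, and it is $e^{n\delta}$ in part~3(a), where $\mu_n$ is moreover exponentially small since $R_2\le I_{\Ptilde^{\prime}}(X_2;Y|X_1)-\delta$. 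In either case $k\ge 2e\mu_n$ holds for $n$ large, and $2^{1-k}$ decays faster than $e^{-cn}$ for every fixed $c$, giving the claimed ``probability approaching one faster than exponentially''. For the \emph{lower} bound (part~2), the hypothesis $R_2\ge I_{\Ptilde^{\prime}}(X_2;Y|X_1)+\delta$ makes $\mu_n$ exponentially large, so the lower-tail Chernoff bound $\PP[N_{\xvbar_1\yv}(\Ptilde_{X_1X_2Y}^{\prime})\le\tfrac12\mu_n]\le e^{-\mu_n/8}$ is super-exponentially small; and $\tfrac12\mu_n\ge M_2 e^{-n(I_{\Ptilde^{\prime}}(X_2;Y|X_1)+\delta)}$ for $n$ large, the sub-exponential slack hidden in $p_n\doteq e^{-nI_{\Ptilde^{\prime}}(X_2;Y|X_1)}$ being absorbed by the $e^{n\delta}$ gap.

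For part~3(b) I would compute $\PP[N_{\xvbar_1\yv}(\Ptilde_{X_1X_2Y}^{\prime})>0]=1-(1-p_n)^{M_2}$ directly. The union bound gives this probability at most $M_2 p_n=\mu_n$, and since $R_2\le I_{\Ptilde^{\prime}}(X_2;Y|X_1)-\delta$ forces $\mu_n\doteq e^{n(R_2-I_{\Ptilde^{\prime}}(X_2;Y|X_1))}\to0$, the elementary inequality $1-e^{-x}\ge x/2$ on $[0,1]$ yields $\PP[N_{\xvbar_1\yv}(\Ptilde_{X_1X_2Y}^{\prime})>0]\ge 1-e^{-\mu_n}\ge\tfrac12\mu_n$; combining the two gives $\PP[N_{\xvbar_1\yv}(\Ptilde_{X_1X_2Y}^{\prime})>0]\doteq\mu_n\doteq M_2 e^{-nI_{\Ptilde^{\prime}}(X_2;Y|X_1)}$. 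The ``moreover'' statement for $N_{\xv_1\yv}(P_{X_1X_2Y}^{\prime})$ requires no new idea: conditioned on $\Xv_1=\xv_1$, the codewords $\{\Xv_2^{(j)}\}_{j\ne1}$ are i.i.d.\ and uniform on $T^n_{\xv_1}(Q_{X_2|X_1,n})$, so $N_{\xv_1\yv}(P_{X_1X_2Y}^{\prime})$ is binomial with $M_2-1$ trials and the same success probability, and the arguments above apply verbatim (replacing $M_2$ by $M_2-1$ is immaterial at the exponential scale).

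The one genuinely delicate point — the step I would be most careful about — is obtaining decay \emph{faster than} exponentially rather than merely exponentially in the near-boundary regime $R_2\approx I_{\Ptilde^{\prime}}(X_2;Y|X_1)$, where $\mu_n$ is sub-exponential and a plain Markov inequality only yields an $e^{-cn}$-type bound. The resolution, as above, is that the thresholds in parts~1 and 3(a) are themselves exponentially large, so that raising an exponentially small base to that power produces the required super-exponential decay. I would also note explicitly that Lemma~\ref{lem:COG_SuperExp} is stated for a single fixed joint type; since there are only polynomially many joint types, a union bound preserves the ``faster than exponentially'' conclusion when these statements are invoked simultaneously over all of them, as is done in the analysis that follows.
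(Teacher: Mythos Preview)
Your proposal is correct and follows essentially the same approach as the paper: the paper likewise reduces parts~1, 2 and 3(a) to binomial concentration exactly as in Lemma~\ref{lem:SUC_SuperExp} (citing \cite{MerhavIC,ExactErasure}), and handles part~3(b) via the union bound together with its tightness to within a factor $\tfrac12$ for independent events \cite[Lemma~A.2]{ShulmanThesis}, which is equivalent to your direct computation from $1-(1-p_n)^{M_2}$. Your explicit tail estimates and your remark about the near-boundary regime $R_2\approx I_{\Ptilde^{\prime}}(X_2;Y|X_1)$ fill in details the paper leaves to the references, but the underlying argument is the same.
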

\begin{proof}
    Parts 1, 2 and 3a are proved in the same way as Lemma \ref{lem:SUC_SuperExp};
    we omit the details to avoid repetition with \cite{MerhavIC,ExactErasure}.
    Part 3b follows by writing the probability that $N_{\xvbar_{1}\yv} > 0$ as 
    a union of the $M_1-1$ events in \eqref{eq:COG_TypeEnum1} holding, and
    using the fact that the truncated union bound is tight to within a factor
    of $\frac{1}{2}$ \cite[Lemma A.2]{ShulmanThesis}.  The truncation need not
    explicitly be included, since the assumption of part 3 implies that 
    $M_{2}e^{-nI_{\Ptilde^{\prime}}(X_{2};Y|X_{1})} \to 0$.
\end{proof}

Given a joint type $P_{X_{2}Y}$ (respectively, $\Ptilde_{X_1Y}$), let $\Ac_{\delta}(\Ptilde_{X_{1}Y})$
(respectively, $\tilde{\Ac}_{\delta}(\Ptilde_{X_{1}Y})$) denote the union of the 
high-probability events in Lemma \ref{lem:COG_SuperExp} (in parts 1, 2 and 3a)
taken over all $P_{X_{1}X_{2}Y}^{\prime}\in\Sc_{1,n}(Q_{X_1X_2},P_{X_2Y})$
(respectively, $\Ptilde_{X_{1}X_{2}Y}^{\prime}\in\Sc_{1,n}^{\prime}(Q_{X_1X_2},\Ptilde_{X_{1}Y})$).
By the union bound, the probability of these events tends to one faster than exponentially,
and hence we can safely condition any event accordingly
without changing the exponential behavior of the corresponding probability
(see \eqref{eq:SUC_SuperExp1}--\eqref{eq:SUC_SuperExp5}).

\subsection*{\bf Step 4: Bound $\Xi_{\yv}(\xv_1,P_{X_{1}X_{2}Y}^{\prime})$ by a Deterministic Quantity}

We first deal with $\Xi_{\yv}(\xv_1,P_{X_{1}X_{2}Y}^{\prime})$ in \eqref{eq:TypeExpr}.
Defining the event
\begin{equation}
    \Bc_{\delta}\defeq\Big\{ N_{\xv_{1}\yv}(P_{X_{1}X_{2}Y}^{\prime})=0\text{ for all }P_{X_{1}X_{2}Y}^{\prime}\text{ such that }R_{2}\le I_{\Ptilde^{\prime}}(X_{2};Y|X_{1})-\delta\Big\}, \label{eq:B_delta}
\end{equation}
we immediately obtain from Property 3b in Lemma \ref{lem:COG_SuperExp} that 
$\PP\big[\Bc_{\delta}^{c}\big]\,\,\dot{\le}\, e^{-n\delta}\to0$, and hence
\begin{align}
      \Phi_3(P_{X_1X_2Y},\Ptilde_{X_1Y},\Ptilde'_{X_1X_2Y})
      & \ge\PP\bigg[\tilde{\Xi}_{\yv}(\xvbar_1,\Ptilde_{X_{1}X_{2}Y}^{\prime})\ge q^{n}(P_{X_{1}X_{2}Y})+p_{0}(n)\max_{P_{X_{1}X_{2}Y}^{\prime}}\Xi_{\yv}(\xv_1,P_{X_{1}X_{2}Y}^{\prime})\,\cap\,\Bc_{\delta}\bigg] \\
      & \doteq\PP\bigg[\tilde{\Xi}_{\yv}(\xvbar_1,\Ptilde_{X_{1}X_{2}Y}^{\prime})\ge q^{n}(P_{X_{1}X_{2}Y})+p_{0}(n)\max_{P_{X_{1}X_{2}Y}^{\prime}}\Xi_{\yv}(\xv_1,P_{X_{1}X_{2}Y}^{\prime})\,\Big|\,\Bc_{\delta}\bigg].     
\end{align}
Next, conditioned on both $\Bc_{\delta}$ and the events in Lemma \ref{lem:COG_SuperExp}, we have
\begin{align}
      & q^{n}(P_{X_{1}X_{2}Y})+p_{0}(n)\max_{P_{X_{1}X_{2}Y}^{\prime}}\Xi_{\yv}(\xv_1,P_{X_{1}X_{2}Y}^{\prime})  \\
      & \qquad=q^{n}(P_{X_{1}X_{2}Y})+p_{0}(n)\max_{\substack{P_{X_{1}X_{2}Y}^{\prime}\in\Sicn^{\prime}(Q_{X_1X_2,n},\Ptilde_{X_{1}Y})\,:\, \\ R_{2}\ge I_{P^{\prime}}(X_{2};Y|X_{1})-\delta}}\Xi_{\yv}(\xv_1,P_{X_{1}X_{2}Y}^{\prime}) \\
      & \qquad\le q^{n}(P_{X_{1}X_{2}Y})+p_{0}(n)\max_{\substack{P_{X_{1}X_{2}Y}^{\prime}\in\Sicn^{\prime}(Q_{X_1X_2,n},\Ptilde_{X_{1}Y})\,:\, \\ R_{2}\ge I_{P^{\prime}}(X_{2};Y|X_{1})-\delta}}M_{2}e^{-n(I_{\Ptilde^{\prime}}(X_{2};Y|X_{1})-2\delta)}q^{n}(P_{X_{1}X_{2}Y}^{\prime}) \label{eq:Cond_Bd_S3} \\
      & \qquad\defeq\Gbar_{\delta,n}(P_{X_{1}X_{2}Y}),  \label{eq:COG_Gbar}
\end{align}
where in \eqref{eq:Cond_Bd_S3} we used part 1 of Lemma \ref{lem:COG_SuperExp}.
It follows that
\begin{equation}
    \Phi_3(P_{X_1X_2Y},\Ptilde_{X_1Y},\Ptilde'_{X_1X_2Y})\,\,\dot{\ge}\,\PP\big[\tilde{\Xi}_{\yv}(\xvbar_1,\Ptilde_{X_{1}X_{2}Y}^{\prime})\ge\Gbar_{\delta,n}(P_{X_{1}X_{2}Y})\big],\label{eq:COG_HandledP}
\end{equation}
where the conditioning on $\Bc_{\delta}$ has been removed
since it is independent of the statistics of $\tilde{\Xi}_{\yv}(\xvbar_1,\Ptilde_{X_{1}X_{2}Y}^{\prime})$.

\subsection*{\bf Step 5: Bound $\Xi_{\xv_{2}\yv}(\xvbar_{1})$ by a Deterministic Quantity}

We now deal with $\tilde{\Xi}_{\yv}(\xvbar_1,\Ptilde_{X_{1}X_{2}Y}^{\prime})$.
Substituting \eqref{eq:COG_HandledP} into \eqref{eq:COG_DefPhi3} 
and constraining the maximization in two different ways, we obtain
\begin{align}
      \Phi_2(P_{X_1X_2Y},\Ptilde_{X_1Y})
      &\,\,\dot{\ge}\,\max\Bigg\{\max_{\substack{\Ptilde_{X_{1}X_{2}Y}^{\prime}\in\Sicn^{\prime}(Q_{X_1X_2,n},\Ptilde_{X_{1}Y})\,:\, \\ R_{2}\ge I_{\Ptilde^{\prime}}(X_{2};Y|X_{1})+\delta}}\PP\bigg[\tilde{\Xi}_{\yv}(\xvbar_1,\Ptilde_{X_{1}X_{2}Y}^{\prime})\ge\Gbar_{\delta,n}(P_{X_{1}X_{2}Y})\bigg],\nonumber \\
      &\qquad\quad\max_{\substack{\Ptilde_{X_{1}X_{2}Y}^{\prime}\in\Sicn^{\prime}(Q_{X_1X_2,n},\Ptilde_{X_{1}Y})\,:\, \\ R_{2}\le I_{\Ptilde^{\prime}}(X_{2};Y|X_{1})-\delta}}\PP\bigg[\tilde{\Xi}_{\yv}(\xvbar_1,\Ptilde_{X_{1}X_{2}Y}^{\prime})\ge\Gbar_{\delta,n}(P_{X_{1}X_{2}Y})\bigg]\Bigg\}.\label{eq:SUC_MaxSplit} 
\end{align}
For $R_{2}\ge I_{\Ptilde^{\prime}}(X_{2};Y|X_{1})+\delta$,
we have from part 2 of Lemma \ref{lem:COG_SuperExp} that, 
conditioned on $\tilde{\Ac}_{\delta}(\Ptilde_{X_{1}Y})$, 
\begin{equation}
    \tilde{\Xi}_{\yv}(\xvbar_1,\Ptilde_{X_{1}X_{2}Y}^{\prime})\ge M_{2}e^{-n(I_{\Ptilde^{\prime}}(X_{2};Y|X_{1})+\delta)}q^{n}(\Ptilde_{X_{1}X_{2}Y}^{\prime}).
\end{equation}
On the other hand, for $R_{2}\le I_{\Ptilde^{\prime}}(X_{2};Y|X_{1})-\delta$, we have
\begin{align}
      & \PP\Big[\tilde{\Xi}_{\yv}(\xvbar_1,\Ptilde_{X_{1}X_{2}Y}^{\prime})\ge\Gbar_{\delta,n}(P_{X_{1}X_{2}Y})\Big] \\
      & \qquad=\PP\Big[\tilde{\Xi}_{\yv}(\xvbar_1,\Ptilde_{X_{1}X_{2}Y}^{\prime})\ge\Gbar_{\delta,n}(P_{X_{1}X_{2}Y})\,\cap\, N_{\xvbar_{1}\yv}(\Ptilde_{X_{1}X_{2}Y}^{\prime})>0\Big] \label{eq:COG_Xitil_S2} \\
      & \qquad=\PP\Big[\tilde{\Xi}_{\yv}(\xvbar_1,\Ptilde_{X_{1}X_{2}Y}^{\prime})\ge\Gbar_{\delta,n}(P_{X_{1}X_{2}Y})\,\big|\, N_{\xvbar_{1}\yv}(\Ptilde_{X_{1}X_{2}Y}^{\prime})>0\Big]\PP\Big[N_{\xvbar_{1}\yv}(\Ptilde_{X_{1}X_{2}Y}^{\prime})>0\Big] \\
      & \qquad\doteq\PP\Big[\tilde{\Xi}_{\yv}(\xvbar_1,\Ptilde_{X_{1}X_{2}Y}^{\prime})\ge\Gbar_{\delta,n}(P_{X_{1}X_{2}Y})\,\big|\, N_{\xvbar_{1}\yv}(\Ptilde_{X_{1}X_{2}Y}^{\prime})>0\Big]M_{2}e^{-nI_{\Ptilde^{\prime}}(X_{2};Y|X_{1})} \label{eq:COG_Xitil_S4} \\
      & \qquad\,\,\dot{\ge}\,\openone\Big\{ q^{n}(\Ptilde_{X_{1}X_{2}Y}^{\prime})\ge\Gbar_{\delta,n}(P_{X_{1}X_{2}Y})\Big\} M_{2}e^{-nI_{\Ptilde^{\prime}}(X_{2};Y|X_{1})}, \label{eq:COG_Xitil_S5}
\end{align}
where \eqref{eq:COG_Xitil_S2} follows since the event under consideration
is zero unless $N_{\xvbar_{1}\yv}(\Ptilde_{X_{1}X_{2}Y}^{\prime})>0$, \eqref{eq:COG_Xitil_S4}
follows from part 3b of Lemma \ref{lem:COG_SuperExp}, and \eqref{eq:COG_Xitil_S5} 
follows since when $N_{\xvbar_{1}\yv}(\Ptilde_{X_{1}X_{2}Y}^{\prime})$ is positive
it must be at least one.

\subsection*{\bf Step 6: Deduce the Exponent for Fixed $(\xv_1,\xv_2,\yv)$}

We have now handled both cases in \eqref{eq:SUC_MaxSplit}.  Combining them, and substituting the result
into \eqref{eq:COG_PartialTypeExpr}, we obtain
\begin{align}
      & \Phi_1(P_{X_1X_2Y}) \,\,\dot{\ge}\,\max_{\Ptilde_{X_{1}Y} \in \Sicn(Q_{X_1,n},P_{Y})} e^{-nI_{\Ptilde}(X_{1};Y)} \nonumber \\
      & \max\Bigg\{\max_{\substack{\Ptilde_{X_{1}X_{2}Y}^{\prime}\in\Sicn^{\prime}(Q_{X_1X_2,n},\Ptilde_{X_{1}Y})\,:\, \\  R_{2}\ge I_{\Ptilde^{\prime}}(X_{2};Y|X_{1})+\delta}}\openone\Big\{ M_{2}e^{-n(I_{\Ptilde^{\prime}}(X_{2};Y|X_{1})+\delta)}q^{n}(\Ptilde_{X_{1}X_{2}Y}^{\prime})\ge\Gbar_{\delta,n}(P_{X_{1}X_{2}Y})\Big\}, \nonumber \\
      & \qquad\qquad\max_{\substack{\Ptilde_{X_{1}X_{2}Y}^{\prime}\in\Sicn^{\prime}(Q_{X_1X_2,n},\Ptilde_{X_{1}Y})\,:\, \\  R_{2}\le I_{\Ptilde^{\prime}}(X_{2};Y|X_{1})-\delta}}M_{2}e^{-nI_{\Ptilde^{\prime}}(X_{2};Y|X_{1})}\openone\Big\{ q^{n}(\Ptilde_{X_{1}X_{2}Y}^{\prime})\ge\Gbar_{\delta,n}(P_{X_{1}X_{2}Y})\Big\}\Bigg\}. \label{eq:COG_Combined}      
\end{align}
Observe that $\Fbarc(P_{X_1X_2Y})$ in \eqref{eq:SUC_Flower} equals the exponent of $\Gbar_{\delta,n}$ 
in \eqref{eq:COG_Gbar} in the limit as $\delta\to0$ and $n\to\infty$. 
Similarly, the exponent corresponding to the quantity in the first indicator function in 
\eqref{eq:COG_Combined} tends to
\begin{equation}
    \Fibarc(\Ptilde_{X_{1}X_{2}Y}^{\prime},R_{2}) \defeq\EE_{\Ptilde^{\prime}}[\log q(X_{1},X_{2},Y)]+R_{2}-I_{\Ptilde^{\prime}}(X_{2};Y|X_{1}). \label{eq:COG_F1upper} 
\end{equation}
Recalling that $\Phi_1$ is the inner probability in \eqref{eq:COG_Union}, we obtain
the following by taking $\delta\to0$ sufficiently slowly  and using the 
continuity of the underlying terms in the optimizations:
\begin{align}
      & \peibar(\xv_{1},\xv_{2},\yv)\,\,\dot{\ge}\,\max\Bigg\{\max_{(\Ptilde_{X_{1}Y},\Ptilde_{X_{1}X_{2}Y}^{\prime})\in\Tici(P_{X_{1}X_{2}Y},R_{2})}M_{1}e^{-nI_{\Ptilde}(X_{1};Y)},\nonumber \\
      & \qquad\qquad\max_{(\Ptilde_{X_{1}Y},\Ptilde_{X_{1}X_{2}Y}^{\prime})\in\Ticii(P_{X_{1}X_{2}Y},R_{2}) }M_{1}e^{-nI_{\Ptilde}(X_{1};Y)}M_{2}e^{-nI_{\Ptilde^{\prime}}(X_{2};Y|X_{1})}\Bigg\},\label{eq:COG_peFinal}
\end{align}
where
\begin{align}
    & \Tici(P_{X_{1}X_{2}Y},R_{2})\defeq\Big\{(\Ptilde_{X_{1}Y},\Ptilde_{X_{1}X_{2}Y}^{\prime})\,:\,\Ptilde_{X_{1}Y}\in\Sic(Q_{X_1},P_Y), \nonumber \\
    & \Ptilde_{X_{1}X_{2}Y}^{\prime}\in\Sipc(Q_{X_1X_2},\Ptilde_{X_{1}Y}),I_{\Ptilde^{\prime}}(X_{2};Y|X_{1})\le R_{2},\Fibarc(\Ptilde_{X_{1}X_{2}Y}^{\prime},R_{2})\ge\Funderc(P_{X_{1}X_{2}Y},R_{2})\Big\}, \label{eq:COG_SetT1_1} 
\end{align}
\vspace*{-5ex}
\begin{multline}
    \Ticii(P_{X_{1}X_{2}Y},R_{2})\defeq\Big\{(\Ptilde_{X_{1}Y},\Ptilde_{X_{1}X_{2}Y}^{\prime})\,:\,\Ptilde_{X_{1}Y}\in\Sic(Q_{X_1},P_Y), \\
    \Ptilde_{X_{1}X_{2}Y}^{\prime}\in\Sipc(Q_{X_1X_2},\Ptilde_{X_{1}Y}),I_{\Ptilde^{\prime}}(X_{2};Y|X_{1})\ge R_{2},\EE_{\Ptilde^{\prime}}[\log q(X_{1},X_{2},Y)]\ge\Funder(P_{X_{1}X_{2}Y},R_{2})\Big\}, \label{eq:COG_SetT1_2}                                  
\end{multline}
and 
\begin{gather}
    \Sic(Q_{X_1},P_Y)                \defeq \Big\{\Ptilde_{X_{1}X_{2}Y} \in \Pc(\Xc_1 \times \Xc_2 \times \Yc) \,:\, \Ptilde_{X_{1}}=Q_{X_1},\Ptilde_{Y}=P_{Y}\Big\}, \label{eq:COG_SetS1} \\
    \Sipc(Q_{X_1X_2},\Ptilde_{X_{1}Y}) \defeq \Big\{\Ptilde_{X_{1}X_{2}Y}^{\prime} \in \Pc(\Xc_1 \times \Xc_2 \times \Yc) \,:\, \Ptilde_{X_{1}Y}^{\prime}=\Ptilde_{X_{1}Y},\Ptilde_{X_1X_2}^{\prime}=Q_{X_1X_2} \Big\}. \label{eq:COG_SetS1'}
\end{gather} 
Specifically, this follows from the same argument as Step 6 in Section \ref{sec:SUC_MAIN_PROOF}.

\subsection*{\bf Step 7: Deduce the Achievable Rate Region}

Similarly to the achievability proof in Section \ref{sec:SUC_MAIN_PROOF}, the fact that the
joint type of $(\Xv_1,\Xv_2,\Yv)$ approaches $Q_{X_1X_2} \times W$ with probability approaching
one means that we can substitute $P_{X_1X_2Y} = Q_{X_1X_2} \times W$ to obtain the following rate conditions:
\begin{align}
    R_1       &\le \min_{(\Ptilde_{X_{1}Y},\Ptilde_{X_{1}X_{2}Y}^{\prime})\in\Tici(Q_{X_1X_2} \times W,R_{2})} I_{\Ptilde}(X_{1};Y), \label{eq:COG_Final1} \\
    R_1 + R_2 &\le \min_{(\Ptilde_{X_{1}Y},\Ptilde_{X_{1}X_{2}Y}^{\prime})\in\Ticii(Q_{X_1X_2} \times W,R_{2}) } I_{\Ptilde}(X_{1};Y) + I_{\Ptilde^{\prime}}(X_{2};Y|X_{1}). \label{eq:COG_Final2}
\end{align}
The proof of \eqref{eq:COG_R1} is now concluded via the same argument as Step 7 in Section \ref{sec:SUC_MAIN_PROOF},
using the definitions of $\Fbarc$, $\Fibarc$, $\Sic$, $\Sipc$, $\Tici$ and $\Ticii$ to
unite \eqref{eq:COG_Final1}--\eqref{eq:COG_Final2}.  Note that the optimization 
variable $\Ptilde_{X_1Y}$ can be absorbed into $\Ptilde'_{X_1X_2Y}$ due to the constraint 
$\Ptilde'_{X_1Y} = \Ptilde_{X_1Y}$.

\section{Conclusion}

We have obtained error exponents and achievable rates for both the standard and cognitive MAC
using a mismatched multi-letter successive decoding rule.   For the cognitive case, we have
proved ensemble tightness, thus allowing us to conclusively establish that there are cases
in which neither the mismatched successive decoding region nor the mismatched maximum-metric
decoding region \cite{MacMM} dominate each other in the random coding setting.

An immediate direction for further work is to establish the ensemble tightness of the 
achievable rate region for the standard MAC in Theorem \ref{thm:SUC_MainResult}.  A 
more challenging open question is to determine whether either of the \emph{true} 
mismatched capacity regions (rather than just achievable random-coding regions) 
for the two decoding rules contain each other in general.

\appendices

%
% ML Equivalence Lemma
%
\section{Behavior of Successive Decoder with $q= W$} \label{sub:SUC_PROOF_ML}

Here we show that a rate pair $(R_1,R_2)$ or 
error exponent $E(R_1,R_2)$ is achievable under
maximum-likelihood (ML) decoding if and only if it is achievable 
under the successive rule in \eqref{eq:SUC_Decoder1}--\eqref{eq:SUC_Decoder2}
with $q(x_1,x_2,y)=W(y|x_1,x_2)$.  This is shown in the same way for the standard MAC and the cognitive MAC,
so we focus on the former.

It suffices to show that, for any fixed codebooks $\Cc_1=\{\xv_1^{(i)}\}_{i=1}^{M_1}$ and
$\Cc_2=\{\xv_2^{(j)}\}_{j=1}^{M_2}$, the error probability under ML decoding
is lower bounded by a constant times the error probability under successive
decoding.  It also suffices to consider the variations of these decoders where
ties are broken as errors, since doing so reduces the error probability 
by at most a factor of two \cite{TwoChannels}. Formally, we consider
the following:
\begin{enumerate}
    \item The ML decoder maximizing $W^n(\yv|\xv_1^{(i)},\xv_2^{(j)})$;
    \item The successive decoder in \eqref{eq:SUC_Decoder1}--\eqref{eq:SUC_Decoder2} with $q=W$;
    \item The genie-aided successive decoder using the true value of $\msg_1$ on the
          second step rather than $\hat{\msg}_1$ \cite{GrantRateSplit}:
          \begin{align}
              \hat{\msg}_{1} &= \arg\max_{i}\sum_{j}W^{n}(\xv_{1}^{(i)},\xv_{2}^{(j)},\yv), \\
              \hat{\msg}_{2} &= \arg\max_{j}W^{n}(\xv_{1}^{(\msg_{1})},\xv_{2}^{(j)},\yv).
          \end{align}        
\end{enumerate}
We denote the probabilities under these decoders by $\PrML[\cdot]$,
$\PrS[\cdot]$ and $\PrGenie[\cdot]$ respectively.  Denoting the 
random message pair by $(\msg_1,\msg_2)$, the resulting estimate by
$(\hat{\msg}_1,\hat{\msg}_2)$, and the output sequence by $\Yv$, we have
\begin{align}
    &\PrML[(\hat{\msg}_1,\hat{\msg}_2) \ne (\msg_1,\msg_2)] \nonumber \\ 
        &~~ \ge \max\bigg\{ \PrML[\hat{\msg}_1 \ne \msg_1], \PrML\bigg[ \bigcup_{j \ne \msg_2}\Big\{W^n(\xv_1^{(\msg_1)},\xv_2^{(j)},\Yv) \ge W^n(\xv_1^{(\msg_1)},\xv_2^{(\msg_2)},\Yv) \Big\} \bigg]\bigg\} \\
        &~~ \ge \max\bigg\{ \PrGenie[\hat{\msg}_1 \ne \msg_1], \PrGenie\bigg[ \bigcup_{j \ne \msg_2}\Big\{W^n(\xv_1^{(\msg_1)},\xv_2^{(j)},\Yv) \ge W^n(\xv_1^{(\msg_1)},\xv_2^{(\msg_2)},\Yv) \Big\} \bigg]\bigg\} \label{eq:SUC_Genie2} \\
        &~~ \ge \frac{1}{2}\PrGenie[(\hat{\msg}_1,\hat{\msg}_2) \ne (\msg_1,\msg_2)] \label{eq:SUC_Genie3} \\
        &~~ = \frac{1}{2}\PrS[(\hat{\msg}_1,\hat{\msg}_2) \ne (\msg_1,\msg_2)] \label{eq:SUC_Genie4},
\end{align} 
where \eqref{eq:SUC_Genie2} follows since the two steps of the genie-aided decoder correspond to minimizing the two terms in the $\max\{\cdot,\cdot\}$, \eqref{eq:SUC_Genie3}
follows by writing $\max\{\PP[A],\PP[B]\} \ge \frac{1}{2}(\PP[A]+\PP[B]) \ge \frac{1}{2}\PP[A \cup B]$,
and \eqref{eq:SUC_Genie4} follows since the overall error probability is unchanged
by the genie \cite{GrantRateSplit}.

\section{Formulations of \eqref{eq:SUC_R1} and \eqref{eq:COG_R1} in Terms of Convex Optimization Problems} \label{sec:SUC_CONVEX}

In this section, we provide an alternative formulation of \eqref{eq:SUC_R1} that is written in terms of convex optimization problems.  We start with the alternative formulation in \eqref{eq:SUC_CondR1}--\eqref{eq:SUC_CondR2}.
We first note that \eqref{eq:SUC_CondR2} holds if and only if
\begin{equation}
    R_{1}\le\min_{\substack{(\Ptilde_{X_{1}X_{2}Y},\Ptilde_{X_{1}X_{2}Y}^{\prime})\in\Tc_{1}^{(2)}(Q_1 \times Q_2 \times W,R_{2})}
        }I_{\Ptilde}(X_{1};X_{2},Y)+\big[I_{\Ptilde^{\prime}}(X_{2};X_{1},Y)-R_{2}\big]^{+},\label{eq:SUC_CondR2a}
\end{equation}
since the argument to the $[\cdot]^{+}$ is always non-negative
due to the constraint $I_{\Ptilde^{\prime}}(X_{2};X_{1},Y)\ge R_{2}$.
Next, we claim that when combining \eqref{eq:SUC_CondR1} and \eqref{eq:SUC_CondR2a},
the rate region is unchanged if the constraint $I_{\Ptilde^{\prime}}(X_{2};X_{1},Y)\ge R_{2}$
is omitted from \eqref{eq:SUC_CondR2a}.  This is seen by noting that
whenever $I_{\Ptilde^{\prime}}(X_{2};X_{1},Y)<R_{2}$, the objective
in \eqref{eq:SUC_CondR2a} coincides with that of \eqref{eq:SUC_CondR1},
whereas the latter has a less restrictive constraint since $\Fbar_{1}>\Fbar_{2}$
(see \eqref{eq:SUC_F1upper}--\eqref{eq:SUC_F2upper}).

We now deal with the non-concavity of the functions $\Fbar_{1}$ and $\Fbar_{2}$
appearing in the sets $\Tc_1^{(1)}$ and $\Tc_1^{(2)}$.
Using the identity
\begin{equation}
    \min_{x\le\max\{a,b\}}f(x)=\min\Big\{\min_{x\le a}f(x),\min_{x\le b}f(x)\Big\},
\end{equation}
we obtain the following rate conditions from \eqref{eq:SUC_CondR1}
and \eqref{eq:SUC_CondR2a}: 
\begin{align}
    R_{1} & \le\min_{\substack{(\Ptilde_{X_{1}X_{2}Y},\Ptilde_{X_{1}X_{2}Y}^{\prime})\in\Tc_{1}^{(1,1)}(Q_1 \times Q_2 \times W,R_{2})} }I_{\Ptilde}(X_{1};X_{2},Y), \label{eq:SUC_CondR1a'}\\
    R_{1} & \le\min_{\substack{(\Ptilde_{X_{1}X_{2}Y},\Ptilde_{X_{1}X_{2}Y}^{\prime})\in\Tc_{1}^{(1,2)}(Q_1 \times Q_2 \times W,R_{2})} }I_{\Ptilde}(X_{1};X_{2},Y), \label{eq:SUC_CondR1b'}\\
    R_{1} & \le\min_{\substack{(\Ptilde_{X_{1}X_{2}Y},\Ptilde_{X_{1}X_{2}Y}^{\prime})\in\Tc_{1}^{(2,1)}(Q_1 \times Q_2 \times W,R_{2})} }I_{\Ptilde}(X_{1};X_{2},Y)+\big[I_{\Ptilde^{\prime}}(X_{2};X_{1},Y)-R_{2}\big]^{+}, \label{eq:SUC_CondR1c'}\\
    R_{1} & \le\min_{\substack{(\Ptilde_{X_{1}X_{2}Y},\Ptilde_{X_{1}X_{2}Y}^{\prime})\in\Tc_{1}^{(2,2)}(Q_1 \times Q_2 \times W,R_{2})} }I_{\Ptilde}(X_{1};X_{2},Y)+\big[I_{\Ptilde^{\prime}}(X_{2};X_{1},Y)-R_{2}\big]^{+},\label{eq:SUC_CondR1d'}
\end{align}
where
\begin{align}
    & \Tc_{1}^{(1,1)}(P_{X_{1}X_{2}Y},R_{2})\defeq\Big\{(\Ptilde_{X_{1}X_{2}Y},\Ptilde_{X_{1}X_{2}Y}^{\prime}):\Ptilde_{X_{1}X_{2}Y}\in\Sc_{1}(Q_{1},P_{X_{2}Y}),  \nonumber \\
    & \Ptilde_{X_{1}X_{2}Y}^{\prime}\in\Sc_{1}^{\prime}(Q_{2},\Ptilde_{X_{1}Y}), I_{\Ptilde^{\prime}}(X_{2};X_{1},Y)\le R_{2},\EE_{\Ptilde}[\log q(X_{1},X_{2},Y)]\ge\Funder(P_{X_{1}X_{2}Y},R_{2})\Big\}, \label{eq:SUC_SetT1_11}
\end{align}
\vspace*{-5ex}
\begin{align}
     \Tc_{1}^{(1,2)}(P_{X_{1}X_{2}Y},R_{2})& \defeq\Big\{(\Ptilde_{X_{1}X_{2}Y},\Ptilde_{X_{1}X_{2}Y}^{\prime}):\Ptilde_{X_{1}X_{2}Y}\in\Sc_{1}(Q_{1},P_{X_{2}Y}),  \nonumber \\
    & \hspace*{-4ex} \Ptilde_{X_{1}X_{2}Y}^{\prime}\in\Sc_{1}^{\prime}(Q_{2},\Ptilde_{X_{1}Y}), I_{\Ptilde^{\prime}}(X_{2};X_{1},Y)\le R_{2}, \nonumber \\ 
    & \hspace*{-4ex} \EE_{\Ptilde^{\prime}}[\log q(X_{1},X_{2},Y)]+R_{2}-I_{\Ptilde^{\prime}}(X_{2};X_{1},Y)\ge\Funder(P_{X_{1}X_{2}Y},R_{2})\Big\}, \label{eq:SUC_SetT1_12} 
\end{align}
\vspace*{-5ex}
\begin{multline}
    \Tc_{1}^{(2,1)}(P_{X_{1}X_{2}Y},R_{2})\defeq\Big\{(\Ptilde_{X_{1}X_{2}Y},\Ptilde_{X_{1}X_{2}Y}^{\prime}):\Ptilde_{X_{1}X_{2}Y}\in\Sc_{1}(Q_{1},P_{X_{2}Y}), \\
    \Ptilde_{X_{1}X_{2}Y}^{\prime}\in\Sc_{1}^{\prime}(Q_{2},\Ptilde_{X_{1}Y}),\EE_{\Ptilde}[\log q(X_{1},X_{2},Y)]\ge\Funder(P_{X_{1}X_{2}Y},R_{2})\Big\}, \label{eq:SUC_SetT1_21}                      
\end{multline}
\vspace*{-5ex}
\begin{multline}
    \Tc_{1}^{(2,2)}(P_{X_{1}X_{2}Y},R_{2})\defeq\Big\{(\Ptilde_{X_{1}X_{2}Y},\Ptilde_{X_{1}X_{2}Y}^{\prime}):\Ptilde_{X_{1}X_{2}Y}\in\Sc_{1}(Q_{1},P_{X_{2}Y}), \\
    \Ptilde_{X_{1}X_{2}Y}^{\prime}\in\Sc_{1}^{\prime}(Q_{2},\Ptilde_{X_{1}Y}),\EE_{\Ptilde^{\prime}}[\log q(X_{1},X_{2},Y)]\ge\Funder(P_{X_{1}X_{2}Y},R_{2})\Big\}.\label{eq:SUC_SetT1_22}                       
\end{multline}
These are obtained from $\Tc^{(k)}$ ($k=1,2$) by keeping only one term in the definition of
$\Fbar_k$ (see \eqref{eq:SUC_F1upper}--\eqref{eq:SUC_F2upper}), and by removing the
constraint $I_{\Ptilde'}(X_2;X_1,Y) \ge R_2$ when $k=2$ in accordance with the discussion following \eqref{eq:SUC_CondR2a}.

The variable $\Ptilde_{X_{1}X_{2}Y}^{\prime}$ can be removed
from both \eqref{eq:SUC_CondR1a'} and \eqref{eq:SUC_CondR1c'}, since
in each case the choice $\Ptilde_{X_{1}X_{2}Y}^{\prime}(x_{1},x_{2},y)=Q_{2}(x_{2})\Ptilde_{X_{1}Y}(x_{1},y)$
is feasible and yields $I_{\Ptilde'}(X_2;X_1,Y) = 0$.
It follows that \eqref{eq:SUC_CondR1a'} and \eqref{eq:SUC_CondR1c'}
yield the same value, and we conclude that \eqref{eq:SUC_R1} can
equivalently be expressed in terms of three conditions: \eqref{eq:SUC_CondR1b'},
\eqref{eq:SUC_CondR1d'}, and 
\begin{align}
    R_{1} & \le\min_{\Ptilde_{X_{1}X_{2}Y}\in\Tc_{1}^{(1,1^{\prime})}(Q_1 \times Q_2 \times W,R_{2})}I_{\Ptilde}(X_{1};X_{2},Y),\label{eq:SUC_CondR1a''} 
\end{align}
where the set
\begin{equation}
    \Tc_{1}^{(1,1')}(P_{X_{1}X_{2}Y},R_{2}) \defeq \Big\{ \Ptilde_{X_{1}X_{2}Y}\in\Sc_{1}(Q_{1},P_{X_{2}Y}) \,:\, \EE_{\Ptilde}[\log q(X_{1},X_{2},Y)]\ge\Funder(P_{X_{1}X_{2}Y},R_{2})\Big\}
\end{equation}
is obtained by eliminating $\Ptilde'_{X_1X_2Y}$ from either \eqref{eq:SUC_SetT1_11}
or \eqref{eq:SUC_SetT1_21}. These three conditions are all written as convex optimization
problems, as desired.

Starting with \eqref{eq:COG_Final1}--\eqref{eq:COG_Final2}, one can 
follow a (a simplified version of) the above arguments for 
the cognitive MAC to show that \eqref{eq:COG_R1} holds if and only if
\begin{align}
    R_1 &\le \min_{(\Ptilde_{X_{1}Y},\Ptilde_{X_{1}X_{2}Y}^{\prime})\in\Tici(Q_{X_{1}X_{2}}\times W,R_{2})} I_{\Ptilde}(X_{1};Y),  \label{eq:COG_Final3} \\
    R_1 &\le \min_{(\Ptilde_{X_{1}Y},\Ptilde_{X_{1}X_{2}Y}^{\prime})\in\Ticiip(Q_{X_{1}X_{2}}\times W,R_{2}) } I_{\Ptilde}(X_{1};Y) + \big[I_{\Ptilde^{\prime}}(X_{2};Y|X_{1}) - R_2\big]^+. \label{eq:COG_Final4}
\end{align}
where
\begin{multline}
    \Ticiip(P_{X_{1}X_{2}Y},R_{2})\defeq\Big\{(\Ptilde_{X_{1}Y},\Ptilde_{X_{1}X_{2}Y}^{\prime})\,:\,\Ptilde_{X_{1}Y}\in\Sic(Q_{X_1},P_Y), \\
    \Ptilde_{X_{1}X_{2}Y}^{\prime}\in\Sipc(Q_{X_1X_2},\Ptilde_{X_{1}Y}),\EE_{\Ptilde^{\prime}}[\log q(X_{1},X_{2},Y)]\ge\Funder(P_{X_{1}X_{2}Y},R_{2})\Big\}, \label{eq:COG_SetT1_2'}                                                                                                                                                                                                                                                                                      
\end{multline}
and where $\Tici$, $\Sic$ and $\Sipc$ are defined in \eqref{eq:COG_SetT1_1}--\eqref{eq:COG_SetS1'}.

\bibliographystyle{IEEEtran}
\bibliography{12-Paper,18-MultiUser,18-SingleUser,35-Other}

\end{document}